\newcommand{\weg}[1]{}
\newcommand{\carlosweg}[1]{}
\definecolor{DarkGreen}{rgb}{0 0.5 0}
\newcommand{\arraysep}{\\[-6pt]}
\newcommand{\textsep}{\vspace{-4pt}}
\newcommand{\otherminitemsep}{\vspace{-4pt}}
\newcommand{\newlineifdraft}{}
\newcommand{\esrsfn}{\mathsf{ers}}
\newcommand{\esrs}[1]{\esrsfn(#1)}
\newcommand{\insertrsfn}{\mathsf{irs}}
\newcommand{\insertrs}[3]{\insertrsfn(#1,#2,#3)}
\newcommand{\res}{/}
\newcommand{\wideres}{\,/\,}
\newcommand{\cfpceqfn}{\triangleright}
\newcommand{\cfpceq}[2]{#1 \cfpceqfn #2}
\begin{document}

%\vspace*{-4cm}

\begin{frontmatter}
  \title{Projections for infinitary rewriting} 
  \author{Carlos Lombardi}
		%\thanksref{ALL}
		%\thanksref{myemail}
  \vspace*{-2mm}
  \address{Universidad Nacional de Quilmes --
    Argentina --
		\texttt{carlos.lombardi@unq.edu.ar}} 
  \author{Alejandro R\'ios}
  \vspace*{-2mm}
  \address{Universidad de Buenos Aires --
    Argentina --
		\texttt{rios@dc.uba.ar}} 
  \author{Roel de Vrijer}
  \vspace*{-2mm}
  \address{Vrije Universiteit Amsterdam --
    The Netherlands --
		\texttt{r.c.de.vrijer@vu.nl}} 
	%\thanks[myemail]{Email:
    %\href{mailto:myuserid@mydept.myinst.myedu} {\texttt{\normalshape
        %myuserid@mydept.myinst.myedu}}} 

\begin{abstract} 
Proof terms in term rewriting are a representation means for reduction sequences, and more in general for contraction activity, allowing to distinguish \eg\ simultaneous from sequential reduction. 
Proof terms for finitary, first-order, left-linear term rewriting are described in \cite{terese}, ch.~8.
In a previous work \cite{nosotros-rta14} we defined an extension of the finitary proof-term formalism, that allows to describe contractions in \emph{infinitary} first-order term rewriting, and gave a characterisation of \peqence. 

In this work, we discuss how \emph{projections} of possibly infinite rewrite sequences can be modeled using proof terms. 
Again, the foundation is a characterisation of projections for finitary rewriting described in \cite{terese}, Sec.~8.7. 
We extend this characterisation to infinitary rewriting and also refine it, by
describing precisely the role that structural equivalence plays in the development of the notion of projection.
The characterisation we propose yields a definite expression, 
\ie\ a proof term, that describes the projection of an infinitary reduction over another.

To illustrate the working of projections, we show how a common reduct of a 
(possibly infinite) reduction and a single step that makes part of it can be obtained
via their respective projections. 
We show, by means of several examples, that the proposed definition yields the 
expected behavior also in cases beyond those covered by this result. 
Finally, we discuss how the notion of limit is used in our definition of projection for infinite reduction.

\end{abstract}

\begin{keyword}
  infinitary term rewriting, proof terms, permutation equivalence, projection
\end{keyword}
\end{frontmatter}

%%%%%        Introduction
%%%%%%%%%%%%%%%%%%%%%%%%%%%%%%%%%%%%%%%%%%
\section{Introduction}
\label{sec:introduction}

%\textbf{Context}

The general scope of this article is infinitary, first-order, left-linear term rewriting,
with strong convergence as the criterion for limits of infinite reductions.
 
The same principles and notions used to study sequences of numbers, or more generally, of points in a topological space, can be applied to \emph{\redseqs} (which are sequences of rewriting steps), and particularly to \emph{infinite} ones.
By adapting the notions of limit and convergence, a target term can be determined for some infinite sequences. Such targets are, usually, infinite terms.

The possibility of infinite \redseqs\ having targets leads to the realm of \emph{infinitary rewriting}. It is natural to wonder whether the notions and results known for finite rewriting have extensions in the infinitary setting. 
Several results, both positive and negative,
%are present 
appear in the literature of the last 25 years
\cite{rewriterewrite}, \cite{orthogonal-itrs-95}, \cite{terese}.

The notion of \emph{projecting} a \redseq\ over another, coinitial, one, has been extensively studied
and finds its origin in a key lemma for confluence of lambda calculus and orthogonal term rewriting, 
the Parallel Moves Lemma
\cite{curry-feys}, \cite{barendregt}, \cite{terese}.
Projections may be used to formulate stronger versions of  \emph{confluence}%
\footnote{The study of infinitary confluence is not a mere extension of the results known for the finitary case. \Eg, the infinitary counterpart of the Newman lemma does not hold, \confer\ \cite{inf-ars,inf-normalization}.}%
.
Given two coinitial sequences $\reda$ and $\redb$, where $t \sredx{\reda} s$ and $t \sredx{\redb} u$, a common reduct of $s$ and $u$ can be obtained by applying to them the projection of $\redb$ over $\reda$, and that of $\reda$ over $\redb$, respectively.
This statement can be further strengthened using characterisations of \emph{permutation equivalence} of reductions. If we use the notation $\reda \res \redb$ for the projection of $\reda$ over $\redb$, $\peq$ for  permutation equivalence and an infix colon ;  for concatenation of reductions,  then a stronger variant of confluence can be stated as follows:
\vspace*{-2mm}
$$ \reda \,;\, \redb \res \reda \ \peq \ \redb \,;\, \reda \res \redb$$
\vspace*{-8mm}

The aim of this article is to present some preliminary definitions and results related to projections, taken  from our ongoing work on infinitary permutation equivalence. 
\weg{%
The general scope of our work, and thereby of this article, is infinitary, first-order, left-linear term rewriting.%
}%
More in particular, our goal is to define projection in such a way that an explicit expression is obtained, representing the projection of an infinitary reduction over another.
We also want to find out to which extent such a characterisation involves the notion of limit.

To this end, we use the representation of infinitary rewriting by means of \emph{proof terms} given in \cite{nosotros-rta14}, which extends that given for finitary, first-order, left-linear term rewriting in \cite{terese}.
A proof term is an expression, namely a term, that describes a reduction. As a matter of fact, something more general: any combination of simultaneous (\ie\ multistep) and sequential reduction can be denoted by a proof term.
\weg{A proof term is an expression, namely a term, that describes reduction, in particular 
any combination of simultaneous (\ie\ multistep) and sequential reduction. }%
Composition, or concatenation, of reductions is represented in the proof term formalism by a binary symbol.  An infix dot is used, so that the composition of (the reductions denoted by) the proof terms $\psi$ and $\phi$ is noted $\psi \comp \phi$.
Infinitary \peqence\ is modeled by equational logic applied to proof terms. 

The study of equivalence between reductions in \cite{terese} includes a characterisation of  projection of  one reduction over another, by means of the binary operation $\res$ defined between proof terms. 
That is, if $\psi$ and $\phi$ are proof terms, then $\psi \res \phi$ is a proof term that represents the projection of $\psi$ over $\phi$.
The definition of the projection operation is given modulo structural equivalence, a subrelation of \peqence\ that is specific for the proof term formalism.  Therefore, some details about how to obtain the proof term corresponding to a projection are left open in that definition.

\medskip
\noindent
\textbf{Results and discussion}

We give a definition of projections for infinitary rewriting, which extends and refines that given in \cite{terese} for the finitary case. The refinement consists in specifying some of the permutation-equivalence transformations that are needed in order to compute projections.

We show a partial confluence result about this definition.
Given a (possibly infinite) reduction $\psi$ and one of its constituent steps, let us call it $\phi$, such that the step can be performed on the source of $\psi$, we prove that $\psi \comp (\phi \res \psi) \peq \phi \comp (\psi \res \phi)$. This statement corresponds to the strengthened variant of confluence described earlier, as expressed by means of proof terms.
We prove this result not in full generality.  The minimal requirement on the step $\phi$ would be that it can already be performed in the source term of the proof term $\psi$, that is, that it does not depend on any previous step in the reduction represented by $\psi$.
This requirement is strengthened in the sense that it not only \emph{holds} for $\phi$, 
% \carlosweg{We restrict the proof to some steps, those for which it is }%
but that, moreover, this is in some sense \emph{evident}, just from the syntactic form of the proof term $\psi$.%

This restriction, made specifically for this exploratory paper, has a twofold motivation. 
Firstly, 
it keeps matters simple, so that they can be clearly explained. Generalisations can be obtained, but they require more complicated techniques. 
Secondly, 
it turns out that in our work on infinitary standardisation, a major motivation for our interest in projections, nothing more is needed.

We show that our definition behaves as expected in some cases that extend the scope of the proven property, by means of several examples.
We remark that in many cases the computation of the (proof term representing the) projection uses the notion of limit only to obtain the source or target term of a proof term; limits are not needed in order to reason specifically about projections. This includes computations of the projection of an infinite reduction over a finite one, and conversely, of a finite reduction over an infinite one.

We point out that limits are needed though, in some finite-over-infinite cases, related to \emph{infinitary erasure}, and also to compute infinite-over-infinite projections. 
%We discuss how our definition could be extended in order to incorporate the notion of limits.

\medskip\noindent
\textbf{Structure of the paper}

In Section~\ref{sec:preliminaries}, we give the needed definitions about infinitary rewriting and the proof term model.
After a preliminary discussion in Section~\ref{sec:projection-intro}, we introduce the definition of projection in Section~\ref{sec:projection}, analyzing it through several examples, and we state and prove our partial confluence result in Section~\ref{sec:property}.
%In Section~\ref{sec:projection}, We discuss briefly in 
% In Section~\ref{sec:projection}, we .
% In Section~\ref{sec:property}, we prove the partial confluence result.
In Section~\ref{sec:limitations}, we explore cases where the explicit mention of  limits in the definition of projection cannot longer be avoided.
Finally, some preliminary conclusions of this work-in-progress, and possible directions for future research, are given in Section~\ref{sec:conclusions}.
An extended version \cite{long-version} includes the omitted proofs, and also some additional material regarding the formal definition of projections.
\vspace{-1ex}

%%%%%        Preliminaries
%%%%%%%%%%%%%%%%%%%%%%%%%%%%%%%%%%%%%%%%%%
\section{Preliminaries}
\label{sec:preliminaries}

%\eccarlos{We assume acquaintance with the subject of infinitary term rewriting and refer to Chapter~12 in \cite{terese} and to \cite{inf-normalization}. We adopt}
We briefly introduce infinitary rewriting by means of the TRS with signature $\set{a/0, f/1, g/1, k/1}$ and the rules  $f(x) \to g(x), g(x) \to k(x)$. Consider the term $f^n(a)$ for some $n < \omega$. In the tree rendering of this term, a sequence of $n$ occurrences of $f$ precedes the  occurrence of $a$.
An \emph{infinite} sequence of chained $f$ symbols represents an \emph{infinite term}, 
which we denote  as $f\om$. 
For each $n < \omega$, this linear tree has an occurrence of $f$ at depth $n$. %, whose representation as a tree includes for each finite $n$, an occurrence of $f$ at depth $n$. We denote such term as $f\om$.
%An infinite term is obtained as the result of considering an infinite sequence of chained $f$ symbols. 
%Such term would not include a bottom symbol, and would be composed of $f$ occurrences only. The (infinite) tree corresponding to this term includes, for each finite $n$, an occurrence of $f$ at depth $n$. We denote such term as $f\om$.
This term is the source of the \emph{infinite reduction sequence} 
$f\om \to g(f\om) \to g^2(f\om) \ldots g^n(f\om) \to g^{n+1}(f\om) \ldots \ $. 
%For any $n < \omega$, the $n$-th step in this sequence transforms the $f$ occurrence at depth $n$ into a $g$ occurrence. 
%One easily sees that the sequence of the targets of these steps converges, with $g\om$ as limit.
Note that the infinite sequence formed by the \emph{targets} of the successive prefixes of this reduction, namely $\langle g(f\om), g^2(f\om), \ldots g^n(f\om) \ldots \rangle$, converges with $g\om$ as limit.
Additionally, the sequence given by the \emph{depth} (distance to the root) in which each step is performed, is simply $\langle 0, 1, 2, \ldots n \ldots \rangle$, so that it tends to infinity.
Such a reduction sequence is considered as (strongly) convergent, having $g\om$ as target.
In turn, it can be further extended from this target, leading to the following reduction sequence
$f\om \to g(f\om) \to g^2(f\om) \to \ldots \, g\om \to k(g\om) \to k^2(g\om) \to \ldots \, k\om$, whose length is $\omega * 2$.

These simple examples show that the application of the notions of \emph{limit} and \emph{convergence} to the study of reduction sequences, lies in the foundation of infinitary term rewriting.

Infinitary term rewriting allows to rigorously define infinite terms and convergent infinite reductions, and study their properties. We refer to Chapter~12 in \cite{terese} and to \cite{inf-normalization} for the basic definitions.  Here we just remark that we adopt
the \emph{strong convergence} criterion:  for a transfinite rewrite sequence to be convergent, we require 
the depths of the successive steps to tend to infinity at each limit ordinal. 

Projections of possibly infinite reductions are also defined in \cite{orthogonal-itrs-95}, and in a similar way, in \cite{terese}, Chapter~12; our work proposes an alternative approach to that subject, via proof terms. Proof terms for term rewriting were introduced in  \cite{terese}, Chapter~8 and have been adapted to the infinitary setting in \cite{nosotros-rta14} and \cite{phdcarlos}.

The idea motivating the definition and application of proof terms is to denote the reductions of some calculus as \emph{terms} over an extended signature.
For each reduction rule in the original TRS, a \emph{rule symbol} is introduced.
The arity of a rule symbol coincides with the number of different variables occurring in the left-hand side of the rule it represents.
\Eg, the signature of proof terms for a first-order TRS \trst\ including the rules $f(x) \to g(x)$, $j(m(x),m(y)) \to k(x)$ and $g(x) \to k(x)$ adds the rule symbols $\mu/1$, $\rho/2$ and $\nu/1$. 

The initial stage in the definition of infinitary proof terms, as given in \cite{phdcarlos,nosotros-rta14}, is the set of \emph{infinitary multi-steps}, \ie, the finite or infinite terms over the signature extended with rule symbols.
%Some examples of finite multi-steps, along with the \trst-reductions they denote, follow.
Multi-steps with exactly one occurrence of a rule symbol denote single reduction steps, \eg\ $\mu(a) : f(a) \to g(a)$, $g(\rho(a,b)) : g(j(m(a),m(b))) \to g(k(a))$. We identify such proof terms as \emph{one-steps}. With more occurrences of rule symbols, we denote multi-steps, like $j(\mu(a), \mu(b)) : j(f(a),f(b)) \sred j(g(a),g(b))$, $\rho(\mu(a),b) : j(m(f(a)),m(b)) \sred k(g(a))$.
A multi-step can be infinite, and even contain infinitely many
rule symbol occurrences, as \eg\ 
%the proof term%
%\footnote{recall that we use the following abridged notation for infinite terms: $f\om$ denotes $f(f(f(\ldots)))$.} %
$\mu^\omega: f^\omega \infred g^\omega$. 

The beginning and end terms of the corresponding reductions are called the
source and target of the proof term. For the proof terms considered so far,
they can be  obtained via rewriting 
in two companion \TRSs, denoted as $SRC$ and $TGT$ respectively.
For each rule symbol $\rho : l \to r$, $SRC$ includes a rule $\rho(x_1, \ldots, x_m) \to l[x_1, \ldots, x_m]$ 
and $TGT$  a rule $\rho(x_1, \ldots, x_m) \to r[x_1, \ldots, x_m]$.  
Source and target  of a proof term are its normal forms in  $SRC$ and $TGT$, respectively.
Of course there are the questions of existence and uniqueness.
First note that both $SRC$ and $TGT$ have unique normal forms, since they are orthogonal infinitary TRSs. It is also not hard to verify that $SRC$ enjoys infinitary strong normalisation ($SN^\infty$). Contrarily, $TGT$ does not enjoy even infinitary weak normalisation ($WN^\infty$) if the TRS includes collapsing rules. 
We conclude that the source of an infinitary multi-step $\psi$ is always uniquely defined.  
The target is only defined if $\psi$ is $WN^\infty$, but if so, it is also unique.
If $\psi$ is not $WN^\infty$ for $TGT$, then we say that $tgt(\psi)$ is undefined.

The set of redexes in $src(\psi)$ corresponding to the rule symbol occurrences in 
$\psi$ admits at least one convergent development (respectively, all 
developments are convergent) 
precisely if  $\psi$ is $WN^\infty$ (respectively $SN^\infty$) in the \TRS\ $TGT$.
An \imstep\ is called \emph{convergent}, if its target can be computed. 

%\smallskip
To complete the definition of the set of finitary proof terms, 
we add a new binary function symbol $\comp$ (written infix), expressing concatenation, 
or composition, of reductions.
Just to give a simple example, the proof term $f(\mu(a)) \comp f(\nu(a))$ 
denotes the two-step reduction $f(f(a)) \to f(g(a)) \to f(k(a))$. The same reduction 
is represented by the proof term $f(\mu(a) \comp \nu(a))$.
Not all terms over the thus 
 extended signature are valid proof terms though, but only those that can be constructed starting
 from the infinitary multi-steps
 by  the following three inductive clauses.
 
First, closure under function or rule symbols:
if $\psi_1, \ldots, \psi_n$ are proof-terms,  then so are 
$f(\psi_1, \ldots, \psi_n)$ and $\mu(\psi_1, \ldots, \psi_n)$.
Source and target terms are defined as expected, \eg\ $src(\mu(\psi_1, \ldots, \psi_n)) = l[src(\psi_1), \ldots, src(\psi_n)]$, where $\mu: l \to h$.
 
Secondly, \emph{binary composition}: 
if $\psi, \phi$ are proof terms, then so is $\psi \comp \phi$,  % is considered a valid proof term
provided that $tgt(\psi) = src(\phi)$.  This presupposes convergence of $\psi$.
The proof term $\psi \comp \phi$ is convergent iff $\phi$ is.
We define $src(\psi \comp \phi) = src(\psi)$ and $tgt(\psi \comp \phi) = tgt(\phi)$.

%\vspace{1mm}
\noindent
\begin{tabular}{@{}p{91mm}c}
\begin{minipage}{91mm}
%\hspace*{10pt}
\normalsize{
\hspace{1.1em}
Thirdly, \emph{infinite composition}: the term corresponding to the figure 
is a proof term, 
if $\psi_0, \psi_1, \psi_2, \ldots$ are, 
provided that
for each $i < \omega$ we have convergence of $\psi_i$  and $tgt(\psi_{i}) = src(\psi_{i+1})$. 
A linear rendering  would be $\psi_0 \comp (\psi_1 \comp (\psi_2 \comp \ldots ))$.
We use $\icomp \psi_i$ as shorthand for this proof term.
}
\end{minipage}
&
\begin{minipage}{38mm}
\hspace*{\stretch{1}}
$\xymatrix@C-20pt@R-20pt{
& \ar[dl] \ar[dr] \cdot \\
\psi_0 & & \ar[dl] \ar[dr] \cdot \\
& \psi_1 & & \ar[dl] \ar[dr] \cdot \\
& & \psi_2 & & \ddots \\
}$
\hspace*{\stretch{1}}
\end{minipage}
\end{tabular}

For $\psi = \icomp \psi_i$, we define $src(\psi) = src(\psi_0)$, and declare that $\psi$ is convergent iff the sequence $\langle \mind{\psi_i} \rangle_{i<\omega}$ tends to infinity. Here $\mindfn$ stands for the \emph{minimal activity depth} of a proof term. \Eg, $\mind{f(\mu(a)) \comp \mu(g(a))} = 0$, since the denoted activity includes a root step; while $\mind{m(f(\mu(a))) \comp m(\mu(g(a)))} = 1$ as the denoted steps are at depths 2 and 1 resp.. \Confer\ \cite{phdcarlos} for details.
If $\icomp \psi_i$ is convergent, then $tgt(\psi)$ is defined as the limit of the sequence $\langle tgt(\psi_i) \rangle_{i<\omega}$.

%\smallskip
By the above definition of proof terms an infinite composition is also a binary composition:
$\icomp \psi_i = \psi_0 \comp (\icomp \psi_{i+1})$. 
To preserve unique constructibility, infinitary proof terms are defined in \cite{phdcarlos} in layers corresponding to ordinal numbers, such that each proof term has a unique layer. Particularly, the (unique) layer of $\psi$ is a limit ordinal iff $\psi$ is an infinite composition.

%\smallskip
\emph{\Peqence} (noted $\peq$ henceforth) relates the proof terms that denote the same reduction in different ways, regarding parallelism/nesting degree, sequential order, and/or localisation.
This relation is defined, in \cite{phdcarlos,nosotros-rta14}, as the congruence generated by 
the following seven basic equivalences: 

$
\begin{array}{lrcl}
\peqidleft & src(\psi) \comp \psi & \eqnpeq & \psi \arraysep
\peqidright & \psi \comp tgt(\psi) & \eqnpeq & \psi \arraysep
\peqassoc & \psi \comp (\phi \comp \chi) 
	& \eqnpeq & 
	(\psi \comp \phi) \comp \chi \arraysep
\peqstruct & f(\psi_1, \ldots, \psi_m) \comp f(\phi_1, \ldots, \phi_m) 
	& \eqnpeq & 
	f(\psi_1 \comp \phi_1, \ldots, \psi_m \comp \phi_m) \arraysep
\peqinfstruct &
  \icomp f(\psi^1_i, \ldots, \psi^m_i)
	& \eqnpeq & 
  f(\icomp \psi^1_i, \ldots, \icomp \psi^m_i) \arraysep
\peqoutin & \mu(\psi_1, \ldots, \psi_m) 
	& \eqnpeq & 
	\mu(s_1, \ldots, s_m) \comp r[\psi_1, \ldots, \psi_m] \arraysep
\peqinout & \mu(\psi_1, \ldots, \psi_m) 
	& \eqnpeq & 
	l[\psi_1, \ldots, \psi_m] \comp \mu(t_1, \ldots, t_m) 
\end{array}
$

\noindent
where $\mu: l \to r$, $s_i = src(\psi_i)$ and $t_i = tgt(\psi_i)$ in \peqinout\ and \peqoutin, augmented with the following equational logic rules:

$
    \begin{array}{c} 
      \psi_i \peq \phi_i \quad \textforall i < \omega \\ 
	    \hline 
      \icomp \psi_i \ \peq\  \icomp \phi_i
    \end{array} 
    \ \ \eqlinfcomp
$

$
    \begin{array}{c} 
    \begin{array}{ll}
      \begin{array}{l}
      \textforall k < \omega \arraysep
      \textnormal{exists } \chi_k, \psi'_k, \phi'_k
      \end{array}
    &
      \left\{
      \begin{array}{l}
      \psi \peqe \chi_k \comp \psi'_k \quad \mind{\psi'_k} > k 
	    \arraysep
	    \phi \peqe \chi_k \comp \phi'_k \quad \mind{\phi'_k} > k 
	    
      \end{array}
      \right.
    \end{array}
    \\
    \hline 
    \psi \ \peq  \phi 
  \end{array} 
  \ \ \eqllim
$

\noindent
Here $\peqe$ is the congruence generated by the seven basic equations, augmented by \eqlinfcomp, but excluding the \eqllim-rule itself.

As a first example of proof terms including composition, and also of \peqence, we consider the proof terms $f \nu a \comp \mu k a$ and $\mu g a \comp g \nu a$ (we omit some unary symbol parentheses in the sequel). These proof terms represent the two possible reduction sequences that transform the source term $fga$ into $gka$. 
Note that \emph{simultaneous reduction} of a set of coinitial redexes is given, in the proof-term model, a specific denotation. In this case, the proof term $\mu \nu a$ denotes, specifically, the simultaneous step $f g a \mulstep g k a$.

We prove that the three given proof terms are \peqent, as follows.
%we show the equivalence between the proof terms $f \mu a \comp \mu g a$ and $\mu f a \comp g \mu a$ (we omit some unary symbol parentheses in the sequel) that represent the two possible reduction sequences that transform the source term $ffa$ into $gga$.
By \peqinout\ and \peqoutin\ we obtain $\mu \nu a \peq f \nu a \comp \mu k a$ and $\mu \nu a \peq \mu g a \comp g \nu a$. Symmetry and transitivity, which are included in the generated congruence, yield $f \nu a \comp \mu k a \peq \mu g a \comp g \nu a$.
Note that the \peqinout\ and \peqoutin\ equations model the permutation of a head step \wrt\ internal activity.

In turn, the \peqstruct\ equation allows to reason about activity lying inside a \emph{fixed context}, %, $m(\Box)$ in the example.
as in the following \peqence\ judgement:
$m f \nu a \comp m \mu k a 
	\peq m(f \nu a \comp \mu k a)
	\peq m \mu \nu a
	\peq m (\mu g a \comp g \nu a)
	\peq m \mu g a \comp m g \nu a
$ where the first use of \peqstruct\ enables the permutation of steps, and the second one yields the equivalence between reduction sequences. Here the fixed context is $m(\Box)$.

%We consider the following example:
%$m f \mu \mu a \comp m \mu g g a 
	%\peq m(f \mu \mu a \comp \mu g g a)
	%\peq m \mu \mu \mu a
	%\peq m (\mu f f a \comp g \mu \mu a)
	%\peq m \mu f f a \comp m g \mu \mu a
%$, 
%by \peqstruct, \peqinout, \peqoutin, and again \peqstruct. The \peqinout\ and \peqoutin\ equations model the permutation of a head step \wrt\ internal activity. In turn, the \peqstruct\ equation allows to reason about activity lying inside a \emph{fixed context}, $m(\Box)$ in the example.

The next example involves infinite composition. 
Consider $\psi = \psi_1 \comp \psi_2$ where $\psi_1 = \icomp g^i(\mu(f\om))$ and $\psi_2 \eqdef \icomp k^i(\nu(g\om))$, and $\phi = \icomp \chi_i$ where $\chi_i = k^i(\mu(f\om) \comp \nu(f\om))$.
The proof terms $\psi$ and $\phi$ denote, respectively, the reduction sequences 
$f\om \to g f\om \to g^2 f\om \infred g\om
  \to k g\om \to k^2 g\om \infred k\om$
and
$f\om \to g f\om \to k f\om \to k g f\om \to k^2 f\om \infred k\om$, 
that are two different ways to perform the transformation of each occurrence of $f$ in $f\om$ to $g$ and subsequently to $k$, by means of the $\mu$- and $\nu$-rules respectively.

Using the augmented congruence, including the \eqllim\ rule, the assertion $\psi \peq \phi$ can be justified.
To start, note that $\psi_1 = \mu f\om \comp \icomp g(g^i \mu f\om)$ just by definition of infinite compositions. 
In turn, \peqinfstruct\ yields $\icomp g(g^i \mu f\om) \peqe g( \icomp g^i \mu f\om) = g(\psi_1)$.
Applying a similar argument on $\psi_2$, and then \peqassoc, we obtain 
$\psi \peqe \mu f\om  \comp (g(\psi_1) \comp \nu g\om ) \comp k(\psi_2)$.
Then, a permutation of steps based on \peqinout\ and \peqoutin\ yields
$\psi \peqe \mu f\om  \comp (\nu f\om \comp k (\psi_1) ) \comp k(\psi_2)$, so that we get
$\psi \peqe (\mu f\om \comp \nu f\om) \comp k(\psi_1 \comp \psi_2) = \chi_0 \comp k(\psi)$, by \peqassoc\ and \peqstruct.
For any $n < \omega$, iterating over the whole argument yields 
$\psi \peqe \chi_0 \comp \chi_1 \comp \ldots \comp \ldots \comp \chi_n \comp k^{n+1}(\psi)$.
On the other hand, it is straightforward to obtain $\phi \peqe \chi_0 \comp \chi_1 \comp \ldots \chi_n \comp \icomp \chi_{n+1+i}$.
Hence \eqllim\ yields $\psi \peq \phi$.

%As an example of both infinite composition and \peqence, consider $\psi = \psi_1 \comp \psi_2$ where $\psi_1 = \icomp g^i(\mu(f\om))$ and $\psi_2 \eqdef \icomp k^i(\nu(g\om))$, and $\phi = \icomp \chi_i$ where $\chi_i = k^i(\mu(f\om) \comp \nu(f\om))$.
%These proof terms denote different sequentialisations of the same reduction, namely, the transformation of each occurrence of $f$ in $f\om$ to $g$ and subsequently to $k$, by means of the $\mu$- and $\nu$-rules 
%respectively.
%Using the augmented congruence, including the \eqllim\ rule, the assertion $\psi \peq \phi$ can be justified as follows.
%Let $n < \omega$. Then 
%
%$\begin{array}{rclcl}
%\psi 
	%& \peqe & (\mu(f\om) \comp g(\psi_1)) \comp (\nu(g\om) \comp k(\psi_2)) \;\;\;
	%& \peqe \;\;& \mu(f\om) \comp (\nu(f\om) \comp k(\psi_1)) \comp k(\psi_2) \arraysep
	%& \peqe & \chi_0 \comp k(\psi) 
	%& \peqe \;\; & \chi_0 \comp k(\chi_0 \comp k(\psi)) \arraysep
	%& \peqe & \chi_0 \comp k(\chi_0) \comp k^2(\psi) 
	%& = \;\;\;& \;\;\chi_0 \comp \chi_1 \comp k^2(\psi) \arraysep
	%& \ldots & 
	%& \peqe & \chi_0 \comp \chi_1 \comp \ldots \comp \chi_n \comp k^{n+1}(\psi)
%\end{array}$ 
%
%\noindent
%by applying \peqinfstruct,  \peqassoc, \peqinout, \peqoutin,  \peqstruct, repeating this 
%for the copy of $\psi$ in $k(\psi)$ followed by \peqstruct, and so on.
%For $\phi$ we have
%$\phi \peqe \chi_0 \comp \chi_1 \comp \ldots \chi_n \comp \icomp \chi_{n+1+i}$.
%Hence \eqllim\ yields $\psi \peq \phi$.

This example shows the relevance of the \eqllim\ rule for  \peqence\ judgements. 
Proof terms $\psi$ and $\phi$ can be proven $\peqe$-equivalent up to an arbitrary activity depth level $n$:  
we have $\psi \peqe \chi_0 \comp \ldots \comp \chi_n \comp \psi'$ and $\phi \peqe \chi_0 \comp \ldots \comp \chi_n \comp \phi'$, where $\mind{\psi'} > n$ and $\mind{\phi'} > n$.
So, $\psi$ and $\phi$ can be transformed into forms whose difference, represented by $\psi'$ and $\phi'$, can be made arbitrarily irrelevant (with minimum activity depth as  the relevance measure).
It is not possible to obtain $\psi \peqe \phi$, however. 
The \eqllim\ rule allows taking limits to conclude $\psi \peq \phi$.

%\smallskip
%We remark that the \emph{simultaneous reduction} of a set of coinitial redexes is given, in the proof-term model, a specific denotation.
%\Eg, the proof term $\mu(\nu(a))$ denotes the (simultaneous) multistep $f(g(a)) \mulstep g(k(a))$, while 
%%The sequential reduction of the two redexes included in this multistep, yields either $f(g(a)) \sstep f(k(a)) \sstep g(k(a))$ or $f(g(a)) \sstep g(g(a)) \sstep g(k(a))$. The faithful description of these sequences as proof terms, $f(\nu(a)) \comp \mu(k(a))$ and $\mu(g(a)) \comp g(\nu(a))$ respectively, are different from the multistep $\mu(\nu(a))$.
%the two ways to contract the two involved redexes sequentially yield the proof terms $f(\nu(a)) \comp \mu(k(a))$ and $\mu(g(a)) \comp g(\nu(a))$. 
%The resulting, different three proof terms, can be shown \peqent, using the equations \peqoutin\ and \peqinout.

Similarly, the infinite multistep $\mu\om$ and the infinite composition $\icomp g^n(\mu(f\om))$ denote, respectively, the simultaneous and sequential contraction of the infinite set of $\mu$-redexes present in the source term $f\om$. In fact, the latter corresponds 
%specifically 
to the \redseq\ $f\om \sstep g(f\om) \sstep g^2(f\om) \sstep \ldots \infred g\om$.
Other sequential reductions of the same set of redexes are denoted by specific infinite composition proof terms. As an example, the sequence
$f\om \sstep f(g(f\om) \sstep g^2(f\om) \sstep g^2(f(g(f\om)) \sstep g^4(f\om) \infred g\om$ can be faithfully denoted by $\icomp g^{2i}(f(\mu(f\om))) \comp g^{2i}(\mu(g(f\om)))$.
Again, all these proof terms can be proven \peqent. In the infinite case, the corresponding equivalence judgement makes use of the \eqllim\ equational rule.

In \cite{nosotros-rta14} we showed that any convergent \redseq\ can be given a precise denotation as a \emph{stepwise} proof term, \ie, a proof term constructed from one-steps, by only using binary and infinitary composition. Moreover, this representation is \emph{unique} modulo the associativity of the composition symbol.
%, modeled as the congruence generated by the \peqassoc\ equation, augmented by the \eqlinfcomp\ and \eqllim\ rules. 
Note that \eg\ $(\mu(f(a)) \cdot \nu(f(a))) \cdot k(\mu(a))$ and $\mu(f(a)) \cdot (\nu(f(a)) \cdot k(\mu(a)))$ are different, albeit equivalent, proof terms. 

%rebracketing}, that is, modulo the associativity of the composition symbol.  Note that \eg\ $(\mu(f(a)) \comp \nu(f(a))) \comp k(\mu(a))$ and $\mu(f(a)) \comp (\nu(f(a)) \comp k(\mu(a)))$ are different as proof terms. Infinitary rebracketing is modeled as the congruence, augmented by the \eqlinfcomp\ and \eqllim\ rules, generated solely by the \peqassoc\ equation.

We gave in \cite{nosotros-rta14} also an alternative proof of the \emph{compression} property for convergent transfinite rewrite sequences, using their representations as proof terms. 
In fact, we proved a strong version: the compressed (\ie\ having length at most $\omega$) \redseq\ is \peqent\ (and not only coincident in source and target) to the original one. 
\weg{We also pointed out that orthogonality is not required at the TRS level.}%
The general argument of our compression proof reflects a remark in \cite{KetemaRTA12}: compression can be considered as a degenerate form of \emph{standardisation}. Based on this idea, we are currently working on standardisation results for infinitary rewriting, also based on the representation of reductions
% and their equivalence, 
by means of proof terms.

Finally, we remark that our definition of the set of proof terms, as well as our characterisation of \peqence, are based on \emph{inductive} notions and techniques.
In particular, inductive reasoning can be used on the set of occurrences in a term, considering their distance to the root which is always finite.
Also, transfinite induction can be used to reason about infinite reduction sequences, since their length can always be expressed as an ordinal.

An alternative approach that incorporates \emph{coinductive} techniques, appears in \cite{endrullis-rta-2015}.
There, convergent reduction sequences are represented by coinductively defined trees, and the \emph{reduction relation} is characterised through a combination of inductive and coinductive fixed points. 
The latter characterisation is formalised in Coq, leading to a Coq-certified proof of compression. The approach is also extended to study infinitary equational reasoning.
On the other hand, their proposal does not describe the space of transfinite reductions in full detail. In particular, it does not allow  different descriptions of sequential and simultaneous reduction, and the order in which disjoint steps are performed cannot be expressed. Neither \peqence\ nor projections are addressed%
\footnote{A limited form of \peqence\ is currently being studied \cite{endrullis-communication-2016}.}%
.
Hence, we perceive this work to be complementary with our characterisation of infinitary rewriting.

% \subsection{Infinitary rewriting}
% \begin{itemize}
% 	\item Infinitary term.
% 	\item Infinitary TRS.
% 	\item Infinitary reduction step.
% 	\item Infinitary reduction sequence, strong convergence.
% \end{itemize}
% 
% \subsection{Proof terms}
% \begin{itemize}
% 	\item Notion of proof term, some examples.
% 	\item Set of proof terms, along with source, target, convergence, $\mindfn$.
% 	\item \Peqence.
% 	\item Some results from [RTA14]: unique denotation modulo rebracketing, strong compression.
% \end{itemize}

%\newpage
%%%%%        Projection
%%%%%%%%%%%%%%%%%%%%%%%%%%%%%%%%%%%%%%%%%%
\section{Finitary and infinitary projections}
\label{sec:projection-intro}

%\vspace{-1.6ex}
\newcommand{\arInfRR}{
\ar@{*{}*@{-}*{\hspace{-3.5mm}\scriptscriptstyle >\hspace{-1mm}>\hspace{-1mm}>}}[rr]
}%
Let $\delta$ be a reduction sequence, and $\gamma$ a coinitial step. The following commutation diagram describes the argument of the Parallel Moves Lemma (PML). \\[-10pt]
$
\begin{array}{l@{\ }l@{\ }l}
\begin{array}{@{}l@{}} \\ \\ \gamma \end{array}
&
\underbrace{\overbrace{
\xymatrix@R=16pt{
\ar@{>>}[d] \ar[r] & 
\ar@{>>}[d] \ar[r] & \ar@{>>}[d] \ar@{.}[r] & \ar@{>>}[d] \ar[r] & 
\ar@{>>}[d] \\
\ar@{>>}[r] & 
\ar@{>>}[r] & \ar@{.}[r] & \ar@{>>}[r] &
}
}^{\delta}}_{\delta \res \gamma}
&
\begin{array}{@{}l@{}} \\ \\ \gamma \res \delta \end{array}
\end{array}
$

This diagram establishes a \emph{particular confluence} property: a common target can be reached by performing $\gamma \res \delta$ and $\delta \res \gamma$, after $\delta$ and $\gamma$ respectively.

If $\delta$ is an \emph{infinite} \redseq, the diagram gets infinite as well: \\[-10pt]
$
\begin{array}{l@{\ }l@{\ }l}
\begin{array}{@{}l@{}} \\ \\ \gamma \end{array}
&
\underbrace{\overbrace{
\xymatrix@R=16pt{
\ar@{>>}[d] \ar[r] & 
\ar@{>>}[d] \ar[r] & \ar@{>>}[d] \ar@{.}[r] & 
\ar@{>>}[d] \ar[r] & \ar@{>>}[d] \ar@{.}[r] & 
\ar@{>>}[d] \arInfRR & & \ar@{>>}[d] \\
\ar@{>>}[r] & 
\ar@{>>}[r] & \ar@{.}[r] & 
\ar@{>>}[r] & \ar@{.}[r] & 
\arInfRR & &
}
}^{\delta}}_{\delta \res \gamma}
&
\begin{array}{@{}l@{}} \\ \\ \gamma \res \delta \end{array}
\end{array}
$

\noindent
leading to an infinite variant of PML.
% \ldots a reference to $PML^{\infty}$ would fit here \ldots
A concrete example follows, using the rule $f(x) \to g(x)$ and omitting parentheses for unary symbols \\[-10pt]
$
\begin{array}{@{}l@{\ }l@{\ }l}
\begin{array}{@{}l@{}} \\[12pt] \gamma \end{array}
&
\underbrace{\overbrace{
\xymatrix@C=14pt@R=16pt{
f\om \ar[d] \ar[r] & 
g f\om \ar[d] \ar[r] & 
g f g f\om \ar[d] \ar[r] & 
g f g^2 f\om \ar[d] \ar@{.}[r] & 
g f g^n f\om \ar[d] \ar[r] & 
g f g^{n+1} f\om \ar[d] \ar@{.}[r] & 
\ar@{>>}[d] \arInfRR & & 
g f g\om \ar[d] 
\\
f g f\om \ar[r] & 
g^2 f\om \ar[r] & 
g^3 f\om \ar[r] & 
g^4 f\om \ar@{.}[r] & 
g^{n+2} f\om \ar[r] & 
g^{n+3} f\om \ar@{.}[r] & 
\arInfRR & &
g\om
}
}^{\delta}}_{\delta \res \gamma}
&
\begin{array}{@{}l@{}} \\ \\[12pt] \gamma \res \delta \end{array}
\end{array}
$

Note that each step in $\delta$ has a nonempty projection after (the respective projection of) $\gamma$.
%, so that in principle, some limit operation is required to compute $\delta \res \gamma$.
The projection of $\redb$ after $\reda$ can be naturally defined as the \emph{limit} of the projections after its successive prefixes.
In turn, the projection of $\reda$ after $\redb$ can be defined as the limit of the projections of the successive prefixes of the former.
Observe that the notion of limit is relevant for the definition of projections, whenever infinite reductions are involved.

In the sequel, we define the projection of one reduction over another 
%The projection of one reduction over another can be defined 
as a binary operation on (their representation as) \emph{proof terms}. 
As permutation equivalence is also characterized on proof terms, we can express in this formalism the stronger version of the confluence criterion suggested by the PML described in the introduction, as follows: $\psi \comp \phi \res \psi \ \peq \ \phi \comp \psi \res \phi$, where $\psi$ and $\phi$ represent $\delta$ and $\gamma$ resp..
This statement also expresses the idea of \emph{orthogonality} between $\psi$ and $\phi$ in a way independent from the syntax of terms, or more generally, the form of the objects being rewritten. As such, it is closely related to the axiom called PERM in \cite{thesis-mellies}, and  \emph{Semantic orthogonality} in \cite{nosotrosPopl2014}.

In the next section, we extend to the infinitary realm a definition given in \cite{terese}, showing that the role of limits in computing projections is very restricted in some cases, as in the example just given.
The strong confluence result is proved, for a very limited case, in Section~\ref{sec:property}.

\newpage
%\vspace*{-1mm}
\section{Projection through proof terms}
\label{sec:projection}

%\vspace{-1ex}
The projection of a reduction over another is defined in \cite{terese} Ch.~8, for finitary term rewriting, as the operation on proof terms defined as follows.

$
\begin{array}{rcl}
\mu(\phi_1, \ldots, \phi_m) \wideres \mu(\psi_1, \ldots, \psi_m) 
	& = & h[\phi_1 \res \psi_1, \ldots, \phi_m \res \psi_m] \arraysep
\mu(\phi_1, \ldots, \phi_m) \wideres l[\psi_1, \ldots, \psi_m]
	& = & \mu(\phi_1 \res \psi_1, \ldots, \phi_m \res \psi_m) \arraysep
l[\phi_1, \ldots, \phi_m] \wideres \mu(\psi_1, \ldots, \psi_m)
	& = & h[\phi_1 \res \psi_1, \ldots, \phi_m \res \psi_m] \arraysep
f(\phi_1, \ldots, \phi_m) \wideres f(\psi_1, \ldots, \psi_m)
	& = & f(\phi_1 \res \psi_1, \ldots, \phi_m \res \psi_m) \arraysep
(\phi \comp \psi) \res \chi 
	& = & \phi \res \chi \comp \psi \wideres (\chi \res \phi) \arraysep
	\chi \res (\phi \comp \psi) 
	& = & (\chi \res \phi) \wideres \psi
\end{array}
$ \\[2pt]
where $\mu : l \to h$.
%\medskip
%A simple example allows to get a first grasp on this definition. 
We give a simple example, using the
%We use the following 
rules: $\rho: j(g(x),y) \to j(x,y)$, $\mu: f(x) \to g(x)$, $\pi: a \to b$, $\tau: c \to d$, $\sigma: m(x) \to n(x)$. \\[2pt]
$
\begin{array}{rcl}
\multicolumn{3}{l}{\big(j(\mu(\pi),m(c)) \comp \rho(b,\sigma(c))\big) \wideres j(f(\pi),\sigma(\tau))} \arraysep
	& = &  j(\mu(\pi),m(c)) \res j(f(\pi),\sigma(\tau)) 
			\comp 
			\rho(b,\sigma(c)) \wideres (j(f(\pi),\sigma(\tau)) \res j(\mu(\pi),m(c))) \arraysep
	& = &  j(\mu(b),n(d))
			\comp 
			\rho(b,\sigma(c)) \wideres j(g(b),\sigma(\tau))
			\arraysep
	& = &  j(\mu(b),n(d))
			\comp 
			\rho(b \res b,\sigma(c) \res \sigma(\tau)) 
			\arraysep
	& = &  j(\mu(b),n(d))
			\comp 
			\rho(b,n(d)) 
\end{array}
$ \\
The projection denotes the steps in $j(\mu(\pi),m(c)) \comp \rho(b,\sigma(c))$ that are not performed in $j(f(\pi),\sigma(\tau))$, namely the $\mu$ and $\rho$-steps, applied on the target of the latter proof term.
We remark that in the last step of this example, we obtain $b \res b = b$ by applying the fourth clause with $m = 0$.

\medskip
The projection operation is defined \emph{modulo} (the relation generated by) the equation \peqstruct. The following example shows why this is required.
%The following example shows the need of applying structural equivalence to compute projections.
\\
$
\begin{array}{rclcl}
\multicolumn{3}{l}{\rho(m(c),b) \wideres j(g(\sigma(c)) \comp g(n(\tau)),b) } 
	& = &  \rho(m(c),b) \wideres j(g(\sigma(c) \comp n(\tau)),b) \arraysep
	& = &  \rho(m(c) \wideres (\sigma(c) \comp n(\tau)), b \res b)
	& = &  \rho((m(c) \res \sigma(c)) \res n(\tau),b) \ \ 
	 = \ \   \rho(n(d),b)  
\end{array}
$ \\
Observe that $j(g(\sigma(c)) \comp g(n(\tau)),b)$ must be transformed into $j(g(\sigma(c) \comp n(\tau)),b)$ in order to apply the second clause in the definition of projection.

\medskip
In the following, we give a variant of the definition of the projection operation, aiming at two goals. 
First, to produce a more precise definition, making the use of structural equivalence explicit. 
Secondly, to obtain proof terms for projections involving \emph{infinite} reductions, at least in some cases.
For the first goal we establish the necessity, in some cases, to transform a proof term into a form that makes a fixed reduction prefix explicit.
This is the role of structural equivalence in the projection, as shown in the last developed example \wrt\ the fixed prefix $j(g(\Box),\Box)$.

\medskip
Let $C$ be a context having a finite number of holes, and $\psi$ a proof term.
We say that $C$ \emph{is a fixed prefix for} $\psi$, iff any of the following items apply: \\[2pt]
$
\begin{array}{@{\hspace*{3mm}\bullet\ \ }l}
C = \Box \arraysep
\psi = f(\psi_1, \ldots, \psi_m), C = f(C_1, \ldots, C_m), \textand C_i \textnormal{ is a fixed prefix for } \psi_i \textforall i \arraysep
\psi = \psi_1 \comp \psi_2 \textnormal{ or } \psi = \icomp \psi_i, \textand C \textnormal{ is a fixed prefix for } \psi_i \textforall i 
\end{array}
$

\smallskip\noindent
Observe that $C$ being a fixed prefix for $\psi$ implies that $C$ is composed by function (opposed to rule and dot) symbols only. $C$ being a fixed prefix is stable by \peqence.

%\medskip
Let $C$ be a context and $\psi$ a proof term, such that $C$ is a fixed prefix for $\psi$.
We define the \emph{explicit fixed-prefix form} of $\psi$ \wrt\ $C$, notation $\cfpceq{\psi}{C}$, as follows: \\[2pt]
$
\begin{array}{@{}r@{\ \ }c@{\ \ }l}
\cfpceq{\psi}{\Box} & \eqdef & \psi \arraysep
\cfpceq{f(\psi_1, \ldots, \psi_m)}{f(C_1, \ldots, C_m)} 
  & \eqdef & f(\cfpceq{\psi_1}{C_1}, \ldots, \cfpceq{\psi_m}{C_m}) \arraysep
\cfpceq{(\psi_1 \comp \psi_2)}{f(C_1, \ldots, C_m)}
  & \eqdef & f(\cfpceq{\psi_{11} \comp \psi_{21}}{C_1}, \ldots, \cfpceq{\psi_{1m} \comp \psi_{2m}}{C_m}) \arraysep & &  
	\textnormal{where } \cfpceq{\psi_i}{f^\Box} = f(\psi_{i1}, \ldots, \psi_{im}) \textnormal{ for } i = 1,2 \arraysep
\cfpceq{(\icomp \psi_i)}{f(C_1, \ldots, C_m)}
  & \eqdef & f(\cfpceq{\icomp \psi_{i1}}{C_1}, \ldots, \cfpceq{\icomp \psi_{im}}{C_m}) \arraysep & &  	
	\textnormal{where } \cfpceq{\psi_i}{f^\Box} = f(\psi_{i1}, \ldots, \psi_{im}) \textforall i < \omega	
\end{array}
$ \\[2pt]
In this definition, as well as in the sequel, $f^\Box$ denotes the context $f(\Box, \ldots, \Box)$. We use also $l^\Box$ and $h^\Box$, where $\mu : l \to h$.
Observe that 
$\cfpceq{j(g(\sigma(c)) \comp g(n(\tau)), b)}{j(g(\Box),\Box)}
  = j(\cfpceq{g(\sigma(c)) \comp g(n(\tau))}{g(\Box)}, \cfpceq{b}{\Box})
  = j(g(\cfpceq{\sigma(c) \comp n(\tau)}{\Box}), b)
  = j(g(\sigma(c) \comp n(\tau)), b)
	$, the form needed to compute the projection in the last given example.
	
\smallskip
We say that a proof term $\psi$ \emph{includes head steps}, if $\psi = \mu(\psi_1, \ldots, \psi_m)$, or either $\psi = \psi_1 \comp \psi_2$ or $\psi = \icomp \psi_i$, and some $\psi_n$ includes head steps.

\smallskip
Given two coinitial proof terms $\psi$ and $\phi$, we define the \emph{projection} of $\psi$ over $\phi$, notation $\psi \res \phi$, as the operation given by the following clauses, considered in order.

\begin{enumerate}[1.]
\item \label{it:proj-id} 
$src(\psi) \wideres \psi \eqdef tgt(\psi)
\qquad\qquad
\psi \wideres src(\psi) \eqdef \psi$
\vspace*{-2pt}

\item \label{it:proj-mumu} 
$\mu(\phi_1, \ldots, \phi_m) \wideres \mu(\psi_1, \ldots, \psi_m)
 \eqdef
h[\phi_1 \res \psi_1, \ldots, \phi_m \res \psi_m]$
\vspace*{-2pt}

\item \label{it:proj-mul} 
$\mu(\phi_1, \ldots, \phi_m) \wideres \psi
\eqdef
\mu(\phi_1 \res \psi_1, \ldots, \phi_m \res \psi_m)$
\qquad\qquad
%where $\cfpceq{\psi}{l^\Box} = l[\psi_1, \ldots, \psi_m]$, 
if $l^\Box$ is a fixed prefix for $\psi$
\vspace*{-2pt}

\item \label{it:proj-lmu} 
$\phi \wideres \mu(\psi_1, \ldots, \psi_m)
\eqdef
h[\phi_1 \res \psi_1, \ldots, \phi_m \res \psi_m]$
\qquad\qquad
%where $\cfpceq{\phi}{l^\Box} = l[\phi_1, \ldots, \phi_m]$, 
if $l^\Box$ is a fixed prefix for $\phi$

\item \label{it:proj-comp-left} 
$(\phi \comp \psi) \wideres \chi
\eqdef
\phi \res \chi \comp \psi \wideres (\chi \res \phi)$
\\
if $\phi \comp \psi$ includes head steps and $\chi = \mu(\chi_1, \ldots, \chi_m)$, $f(\chi_1, \ldots, \chi_m)$ or $\icomp \chi_i$

\item \label{it:proj-comp-right} 
$\chi \wideres (\phi \comp \psi)
\eqdef
(\chi \res \phi) \wideres \psi$
\qquad\qquad
if either $\phi \comp \psi$ or $\chi$ include head steps  

\item \label{it:proj-ff} 
$\phi \wideres \psi
\eqdef
f(\phi_1 \res \psi_1, \ldots, \phi_m \res \psi_m)$
\qquad
%where $\cfpceq{\phi}{f^\Box} = f(\phi_1, \ldots, \phi_m)$
%and $\cfpceq{\psi}{f^\Box} = f(\psi_1, \ldots, \psi_m)$ \\
%
if $f^\Box$ is a fixed prefix for both $\phi$ and $\psi$
\end{enumerate} 

\noindent
where in clauses \ref{it:proj-mumu}, \ref{it:proj-mul} and \ref{it:proj-lmu}, $\mu: l \to h$;
and also
$\cfpceq{\psi}{l^\Box} = l[\psi_1, \ldots, \psi_m]$ in clause \ref{it:proj-mul}, 
$\cfpceq{\phi}{l^\Box} = l[\phi_1, \ldots, \phi_m]$ in clause \ref{it:proj-lmu}, and analogously for $\cfpceq{\phi}{f^\Box}$ and $\cfpceq{\psi}{f^\Box}$ in clause \ref{it:proj-ff}. We remark that clauses \ref{it:proj-comp-left} and \ref{it:proj-comp-right} apply to both binary and infinitary composition.

% Some preliminary comments about this definition follow.
% Firstly, it is given modulo the relation generated by the equations \peqidleft, \peqidright\ and \peqassoc, the so-called \emph{reduction identities} in \cite{terese}.
% Secondly, we assume implicitly that $\psi$ and $\phi$ are \emph{mutually orthogonal}, even if the underlying TRS is not. This implies particularly that if $\phi = \mu(\phi_1, \ldots, \phi_m)$ where $\mu: l \to h$, and $\psi$ does not include head steps, then $l^\Box$ is a fixed prefix for $\psi$.
% Finally, we note that clause \ref{it:proj-id} is needed to avoid infinite iteration if $src(\psi)$ is an \emph{infinite} term, \eg\ to compute $f\om \res f\om$ without infinite iteration of clause \ref{it:proj-ff}.

We add a few comments on this definition of projection.
First, when using the definition we will always consider proof terms modulo the relation generated by the equations \peqidleft, \peqidright\ and \peqassoc, the so-called \emph{reduction identities} in \cite{terese}.
Secondly, we assume %implicitly 
that $\psi$ and $\phi$ are \emph{mutually orthogonal}, even if the underlying TRS is not. This implies in particular that if $\phi = \mu(\phi_1, \ldots, \phi_m)$ where $\mu: l \to h$, and $\psi$ does not include head steps, then $l^\Box$ is a fixed prefix for $\psi$.
Finally, we note that clause \ref{it:proj-id} is needed to avoid infinite iteration if $src(\psi)$ is an \emph{infinite} term. Otherwise, \eg\ to compute $f\om \res f\om$ clause \ref{it:proj-ff} would have to be applied ad infinitum.

%\smallskip
We show some simple cases of projections involving infinite proof terms, using the rule $\mu: f(x) \to g(x)$.
Omitting parentheses for unary function symbols,% if no confusion arises.
%As a first example, we have \\
we have \eg\
$f \mu\om \res \mu f \mu f\om
	= g(\mu\om \res f \mu f\om)
	= g \mu (\mu\om \res \mu f\om)
	= g \mu g (\mu\om \res f\om)
	= g \mu g \mu\om$,   %\\
applying clauses \ref{it:proj-lmu}, \ref{it:proj-mul}, \ref{it:proj-mumu} and \ref{it:proj-id} respecively.  We can also obtain the projection \emph{over} an infinite reduction:   %\\
$\mu f \mu f\om \res f \mu\om 
  = \mu(f \mu f\om \res \mu\om)
	= \mu g (\mu f\om \res \mu\om)
	= \mu g g (f\om \res \mu\om)
	= \mu g\om$.%

Sequential reductions lead to more laborious projection computations: \\
$
\begin{array}{rcll}
\multicolumn{4}{l}{\icomp f g^i \mu f\om \wideres (\mu f\om \comp g f \mu f\om)} \arraysep
  & = & (\icomp f g^i \mu f\om \res \mu f\om) \wideres g f \mu f\om 
		& \textnormal{clause \ref{it:proj-comp-right}}	\arraysep
  & = & g (\icomp g^i \mu f\om \res f\om) \wideres g f \mu f\om 
		& \textnormal{clause \ref{it:proj-lmu}}	\arraysep
  & = & g (\icomp g^i \mu f\om) \wideres g f \mu f\om 
		& \textnormal{clause \ref{it:proj-id}}	\arraysep
  & = & g (\icomp g^i \mu f\om  \wideres f \mu f\om) 
		& \textnormal{clause \ref{it:proj-ff}}	\arraysep
  & = & g ((\mu f\om \res f \mu f\om) \comp (\icomp g^{i+1} \mu f\om \wideres (f \mu f\om \res \mu f\om)) ) 
		& \textnormal{clause \ref{it:proj-comp-left}}	\arraysep
  & = & g (\mu g f\om \comp (\icomp g^{i+1} \mu f\om \wideres g \mu f\om) ) \arraysep
  & = & g (\mu g f\om \comp g (\icomp g^{i} \mu f\om \wideres \mu f\om) ) 
		& \textnormal{clause \ref{it:proj-ff}}	\arraysep
%  & = & g (\mu g f\om \comp g (g f\om \comp \icomp g^{i+1} \mu f\om) ) \arraysep
  & = & g (\mu g f\om \comp 
		g (\mu f\om \res \mu f\om \comp (\icomp g^{i+1} \mu f\om \wideres (\mu f\om \res \mu f\om))) 
					)
		& \textnormal{clause \ref{it:proj-comp-left}}	\arraysep
  & = & g (\mu g f\om \comp 
		g (g f\om \comp (\icomp g^{i+1} \mu f\om \wideres g f\om)) 
					) \arraysep
  & = & g (\mu g f\om \comp 
		g (g f\om \comp \icomp g^{i+1} \mu f\om ) 
					) 
		& \textnormal{clause \ref{it:proj-id}}	\arraysep
  & = & g (\mu g f\om \comp g (\icomp g^{i+1} \mu f\om) ) 
		& \textnormal{reduction identities} 
\end{array}
$ 

\noindent
Note that the explicit fixed-prefix form of an infinite composition is used several times, namely, in the use of clause \ref{it:proj-lmu} and both uses of clause \ref{it:proj-ff}.

By relating the given examples, we observe that simultaneous and sequential descriptions of the same reduction lead to \peqent\ projections. 
In this case we have $g \mu g \mu\om \peq g (\mu g f\om \comp g (\icomp g^{i+1} \mu f\om) )$, as we prove in the following.
Note that for any $n < \omega$, using just \peqassoc\ we obtain 
$\icomp g^{i} \mu f\om \peqb \mu f\om \comp g \mu f\om \comp \ldots \comp g^n \mu f\om \comp \icomp g^{i+n+1} \mu f\om$. On the other hand, \peqoutin\ and \peqstruct\ yield
$\mu\om \peqb \mu f\om \comp g \mu\om \peqb \mu f\om \comp g (\mu f\om \comp g \mu\om) \peqb \mu f\om \comp g \mu f\om \comp g^2 \mu\om $, so that a simple iteration entails $\mu\om \peqb \mu f\om \comp g \mu f\om \comp \ldots \comp g^n \mu f\om \comp g^{n+1} \mu\om$. 
Hence \eqllim\ allows to assert $\icomp g^i \mu f\om \peq \mu\om$.
In turn, $g \mu g \mu\om \peq g (\mu g f\om \comp g^2 \mu\om)$ while
$g (\mu g f\om \comp g (\icomp g^{i+1} \mu f\om) ) \peq g (\mu g f\om \comp g^2 (\icomp g^{i} \mu f\om) )$, where \peqinfstruct\ is used for the latter assertion.
Hence, congruence allows to conclude.

%Note also that simultaneous and sequential descriptions of the same reduction lead to \peqent\ projections. 
%Observe that $g \mu g \mu\om \peqb g (\mu g f\om \comp g^2 \mu\om)$ and $g (\mu g f\om \comp g (\icomp g^{i+1} \mu f\om) ) \peqb g (\mu g f\om \comp g^2 (\icomp g^{i} \mu f\om) )$. In turn, for any $n < \omega$, using just \peqassoc\ we obtain 
%$\icomp g^{i} \mu f\om \peqb \mu f\om \comp g \mu f\om \comp \ldots \comp g^n \mu f\om \comp \icomp g^{i+n+1} \mu f\om$, and on the other hand, \peqoutin\ and \peqstruct\ yield
%$\mu\om \peqb \mu f\om \comp g \mu\om \peqb \mu f\om \comp g (\mu f\om \comp g \mu\om) \peqb \mu f\om \comp g \mu f\om \comp g^2 \mu\om $, so that iterating over this process we obtain $\mu\om \peqb \mu f\om \comp g \mu f\om \comp \ldots \comp g^n \mu f\om \comp g^{n+1} \mu\om$. Hence \eqllim\ allows to assert $\icomp g^i \mu f\om \peq \mu\om$, and consequently, $g (\mu g f\om \comp g (\icomp g^{i+1} \mu f\om) ) \peq g \mu g \mu\om$.

%\smallskip
Finally we remark that in the given examples, projections involving an infinite proof term are successively decomposed, until 
%the counterpart is trivial; at this point, 
clause \ref{it:proj-id} can be used to obtain a final expression for the projection.
Limits are only indirectly involved, to compute source or target terms in the uses of that clause. %Hence, limits are not explicitly used to define projections, and yet, projections involving infinite proof terms can be solved. This will not always be the case, however. 
In Section~\ref{sec:limitations} we discuss some examples of projections where 
%, in order to obtain projections involving infinite proof terms, 
limits should be used in a more  essential way.
\vspace{-2ex}

% \subsection{Definitions}
% \begin{itemize}
% 	\item Informal idea of the definition of projection, need of structural equivalence.
% 	\item Definition of $\psi$ respects $\Spa$.
% 	\item Definition of $\cfpceq{\psi}{C}$.
% 	\item Definition of $\psi \res \phi$.
% \end{itemize}
% 
% \subsection{Examples}
% Several examples justifying different features of the definition.

%%%%%        The property
%%%%%%%%%%%%%%%%%%%%%%%%%%%%%%%%%%%%%%%%%%
\section{A partial confluence property}
\label{sec:property}

The definition of infinitary projections given in Section~\ref{sec:projection} allows to study the statement $\psi \comp (\phi \res \psi) \ \peq \ \phi \comp (\psi \res \phi)$, that we described in Section~\ref{sec:projection-intro}.
Let us verify this property for the first example of Section~\ref{sec:projection}, where $\psi = f \mu\om$ and $\phi = \mu f \mu f\om$, and the %obtained 
projections are $\psi \res \phi = g \mu g \mu\om$ and $\phi \res \psi = \mu g\om$. 
%\pagebreak
We have \\[2pt]
$
\begin{array}{r@{\ \ }c@{\ \ }ll}
	\psi \comp \phi \res \psi \ = \ 
	f \mu\om \comp \mu g\om & \peq & 
			\mu\om \ \peq \ 
			\mu f\om \comp g \mu\om 
		& \peqinout, \peqoutin \arraysep
			& \peq & 
			\mu f\om \comp g (f \mu\om \comp \mu g\om)
		& \peqinout \arraysep
			& \peq & 
			\mu f\om \comp g (f \mu f\om \comp f g \mu\om \comp \mu g\om)
		& \peqoutin, \peqstruct \arraysep
			& \peq & 
			\mu f\om \comp g f \mu f\om \comp g (f g \mu\om \comp \mu g\om)
		& \peqstruct \arraysep
			& \peq & 
			\mu f \mu f\om \comp g \mu g \mu\om
		& \peqoutin, \peqinout \arraysep
		& = & \phi \comp \psi \res \phi
\end{array}
$

\medskip
This section is devoted to proving the above mentioned result in a very limited case; namely, when $\phi$ denotes a single step on the source term of $\psi$, that is actually included in $\psi$.
Moreover, we ask $\phi$ to denote a step \emph{easily extractable} from $\psi$.  The forthcoming statement covers \eg\ this case: $\psi = \mu\om$ or $\psi = \icomp g^i \mu f\om$, and $\phi = \mu f\om$.
The example just described is not comprised: $\phi$ denotes two (simultaneous) steps , and one of them (the outermost one) is not included in $\psi$.
\vspace{-2ex}

\subsection{Easily extractable steps}
Roughly speaking, a step included in a proof term $\psi$, that is, a rule symbol occurrence in $\psi$, is easily extractable if 
there are no other rule symbols in $\psi$ denoting activity performed before that step, that affect positions in its pattern (that is, in the left-hand side pattern that is replaced by that step) or above it.
\Eg, if $\mu: f(x) \to g(x)$, $\nu: g(x) \to k(x)$, and $\pi: a \to b$, then the only easily extractable step in $\mu(a) \comp \nu(\pi)$ is the $\mu$ occurrence%
, since it denotes a step that is performed before both the $\nu$- and the $\pi$-steps and affects the root position, the same as the $\nu$-step, and above that corresponding to the $\pi$-step.
On the other hand, both the $\mu$ and the $\pi$ occurrences are easily extractable in the equivalent $\mu(\pi) \comp \nu(b)$, since they are performed simultaneously.
We note that function symbols do not affect extractability, \eg\ all the rule symbol occurrences are easily extractable in $j(\mu(\pi),\nu(c))$.

Formally, we define the set of easily extractable rule symbol occurrences in a proof term $\psi$, notation $\esrs{\psi}$, as a set of \emph{pairs of positions}.
The left component is the \emph{contraction position}, \ie\ the position in $src(\psi)$ where the step can be applied. The right component is the \emph{position of the rule symbol occurrence} in the proof term.
\Eg, 
if $\psi = \mu(a) \comp \nu(\pi)$, the only element of $\esrs{\psi}$ is 
%the only element of $\esrs{\mu(a) \comp \nu(\pi)}$ is 
$\pair{\epsilon}{1}$: the $\mu$ occurrence at \underline{position $1$ in $\psi$} can be applied at \underline {position $\epsilon$ on $src(\psi) = f(a)$}. 

%\medskip
As the material of this section is deeply based on position analysis, we define an analogous to the fixed-prefix context property, given in terms of positions.
Let $\Spa$ be a set of positions and $\psi$ a proof term. We say that $\psi$ \emph{respects} $\Spa$ iff the latter is finite and prefix-closed, and any of the following applies %(in the third and fourth clauses, $\psi$ is not a multistep):
\begin{itemize}
\item 
$\psi$ is an infinitary multistep, $\Spa \subseteq \Pos{\psi}$ and $\psi(p) \in \Sigma$ for all $p \in \Spa$.
\otherminitemsep
\item 
$\psi = \psi_1 \comp \psi_2$, or $\psi = \icomp \psi_i$, and all $\psi_i$ respect $\Spa$
\otherminitemsep
\item 
$\psi = f(\psi_1, \ldots, \psi_m)$ and either $\Spa = \emptyset$ or $\psi_i$ respects $\proj{\Spa}{i}$ for all $i$
\otherminitemsep
\item 
$\psi = \mu(\psi_1, \ldots, \psi_m)$ and $\Spa = \emptyset$
\end{itemize}
\textsep
where $\proj{\Spa}{i} \eqdef \set{p \setsthat ip \in \Spa}$, and $\psi$ is assumed not a multistep in the last two clauses.

It is easy to verify that: 
(1) for any proof term $\psi$ and context $C$, $C$ is a fixed prefix for $\psi$ iff $\psi$ respects the set of non-hole positions of $C$, (2)
if $\psi$ respects $\Spa$, then $src(\psi)(r) = tgt(\psi)(r)$ for all $r \in \Spa$, and (3) \peqence\ preserves the \emph{respects} property. \Confer\ \cite{phdcarlos}, Sec.~5.5.

%\smallskip
We now give the formal definition of $\esrsfn$. 
\\[4pt]
$
\begin{array}{rcl}
\esrs{\mu(\psi_1, \ldots, \psi_m)} 
	& \eqdef 
	& \set{\pair{\epsilon}{\epsilon}} \ \cup \ 
	\set{\pair{r_1 r_2}{ip} \setsthat \pair{r_2}{p} \in \esrs{\psi_i} \land l(r_1) = x_i}
	\arraysep & & \textnormal{where } \mu: l \to h \arraysep
\esrs{f(\psi_1, \ldots, \psi_m)} 
	& \eqdef 
  & \underset{i}{\bigcup} \set{\pair{ir}{ip} \setsthat \pair{r}{p} \in \esrs{\psi_i}} \arraysep
\esrs{\psi_1 \comp \psi_2} 
	& \eqdef 
	& \set{\pair{r}{1p} \setsthat \pair{r}{p} \in \esrs{\psi_1}} 
	  \ \cup \ 
	\set{\pair{r}{2p} \setsthat \pair{r}{p} \in \esrs{\psi_2} 
  \arraysep & & \ \ \ \land \ \psi_1 \textnormal{ respects } \set{r' \setsthat r' < r} \cup (r \cdot \PPos{\psi_2(p)})} 
\arraysep
\esrs{\icomp \psi_i} 
	& \eqdef 
	& \set{\pair{r}{2^j1p} \setsthat \pair{r}{p} \in \esrs{\psi_j} 
  \arraysep & & \ \ \ \land \ \psi_i \textnormal{ respects } \set{r' \setsthat r' < r} \cup (r \cdot \PPos{\psi_j(p)})} \textforall i < j	
\end{array}
$ \\ [2pt]
where $\PPos{\mu} = \set{p \in l \setsthat l(p) \notin \thevar}$ and $\mu: l \to h$.

The set of easily extractable steps is restricted to keep the definition 
simple, avoiding non-trivial analysis of positions. 
%\Eg, note that if $\psi = \mu(a) \comp \nu(\pi)$, the $\pi$-step, which is not included in $\esrs{\psi}$, can be indeed performed on $src(\psi) = f(a)$.
\Eg\ in $\psi = \mu(a) \comp \nu(\pi)$, the $\pi$-step, while not included in $\esrs{\psi}$, could be performed on $src(\psi) = f(a)$.

We verify that all easily extractable rule symbol occurrences are indeed extractable (to the source of the proof term) rule symbol occurrences.
\begin{lemma}
\label{rsl:esrs-are-es}
Let $\psi$ be a proof term, and $\pair{r}{p} \in \esrs{\psi}$. Then 
$\psi(p)$ is a rule symbol, say $\psi(p) = \mu$, and $\subtat{src(\psi)}{r} = l[s_1, \ldots, s_k]$ where $\mu: l \to h$.
\end{lemma}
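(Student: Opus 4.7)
The proof proceeds by structural induction on $\psi$, following the four clauses in the definition of $\esrsfn$. In each case I will show both conjuncts: that $\psi(p)$ is indeed a rule symbol, and that $\subtat{src(\psi)}{r}$ has the required pattern shape $l[s_1, \ldots, s_k]$.

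The two symbol-headed cases are direct. For $\psi = \mu(\psi_1, \ldots, \psi_m)$ with $\mu: l \to h$, if $\pair{r}{p} = \pair{\epsilon}{\epsilon}$ then $\psi(\epsilon) = \mu$ and $src(\psi) = l[src(\psi_1), \ldots, src(\psi_m)]$, giving the claim at once. Otherwise $\pair{r}{p} = \pair{r_1 r_2}{ip}$ with $\pair{r_2}{p} \in \esrs{\psi_i}$ and $l(r_1) = x_i$; applying the inductive hypothesis to $\psi_i$ shows $\psi(p) = \psi_i(p)$ is a rule symbol, and since $r_1$ points to the variable $x_i$ in the pattern $l$, the subterm at $r_1$ in $src(\psi)$ is $src(\psi_i)$, so $\subtat{src(\psi)}{r_1 r_2} = \subtat{src(\psi_i)}{r_2}$ is of the required form. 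The case $\psi = f(\psi_1, \ldots, \psi_m)$ is handled analogously but even more directly, since $src(\psi) = f(src(\psi_1), \ldots, src(\psi_m))$ and $r$ begins with the same index $i$ as $p$.

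The two composition cases share a common structure. For $\psi = \psi_1 \comp \psi_2$ and a pair of the form $\pair{r}{1p}$ inherited from $\esrs{\psi_1}$, the result is immediate from the inductive hypothesis since $src(\psi) = src(\psi_1)$. The interesting subcase is $\pair{r}{2p}$ with $\pair{r}{p} \in \esrs{\psi_2}$, where the definition requires that $\psi_1$ respect the set $\set{r' \setsthat r' < r} \cup (r \cdot \PPos{\psi_2(p)})$. Applying the inductive hypothesis to $\psi_2$ gives $\psi_2(p) = \mu$ and $\subtat{src(\psi_2)}{r} = l[s_1, \ldots, s_k]$, which in particular pins down $src(\psi_2)(rq) = l(q)$ for all $q \in \PPos{\mu}$. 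Now the key step: by property (2) of the \emph{respects} relation (quoted in the paper, $\psi_1$ respecting $\Spa$ implies $src(\psi_1)(r') = tgt(\psi_1)(r')$ for all $r' \in \Spa$), combined with $tgt(\psi_1) = src(\psi_2)$, the pattern positions of $\mu$ placed at $r$ in $src(\psi_2)$ transfer back to $src(\psi_1) = src(\psi)$ unchanged. Hence $\subtat{src(\psi)}{r}$ matches $l$ on the pattern, so it equals $l[s'_1, \ldots, s'_k]$ for suitable $s'_i$, as required.

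For $\psi = \icomp \psi_i$ and a pair $\pair{r}{2^j 1 p}$ with $\pair{r}{p} \in \esrs{\psi_j}$, the argument generalises the binary composition case by iteration. The inductive hypothesis on $\psi_j$ gives the rule symbol $\mu$ and the pattern structure at $r$ in $src(\psi_j)$. Because every $\psi_i$ for $i < j$ respects $\set{r' \setsthat r' < r} \cup (r \cdot \PPos{\mu})$, property (2) yields $src(\psi_i)(rq) = tgt(\psi_i)(rq) = src(\psi_{i+1})(rq)$ for all $q \in \PPos{\mu}$; chaining these $j$ equalities propagates the pattern of $\mu$ backwards through $src(\psi_{j-1}), \ldots, src(\psi_0) = src(\psi)$. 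The main technical point of the whole proof is thus the careful use of the \emph{respects} machinery in the composition clauses; everything else is bookkeeping on positions and the definition of source via the $SRC$-rewriting of proof terms.
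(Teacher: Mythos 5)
Your case analysis and the central step coincide with the paper's proof: in the composition cases you use property (2) of the \emph{respects} relation to transfer the $l$-pattern at position $r$ from $src(\psi_2)=tgt(\psi_1)$ back to $src(\psi_1)=src(\psi)$, and in the infinite-composition case you chain this through $\psi_{j-1},\ldots,\psi_0$; that is exactly the argument in the paper. The one genuine problem is the induction measure you announce. Plain structural induction on $\psi$ is not available in this infinitary setting: in the symbol-headed clauses the argument $\psi_i$ need not be smaller than $\psi$ in any well-founded sense, since multisteps may be infinite terms. For instance, for $\psi=\mu^\omega$ one has $\psi_1=\mu^\omega=\psi$, so your appeal to ``the inductive hypothesis on $\psi_i$'' in the $\mu(\psi_1,\ldots,\psi_m)$ and $f(\psi_1,\ldots,\psi_m)$ cases is, as stated, circular. (The composition cases are unproblematic, because there the proof-term layer genuinely decreases.)

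The repair is already implicit in your own bookkeeping: in the two symbol-headed cases the position strictly shortens ($r_2$ is shorter than $r_1r_2$ because $r_1\neq\epsilon$, and $r'$ is shorter than $ir'$), while in the composition cases $r$ is unchanged but the structure of the proof term drops. So the induction must be taken on the pair $\pair{\psi}{r}$, decreasing lexicographically, rather than on $\psi$ alone --- which is precisely what the paper does, flagging this very point in a footnote. With the measure corrected, your proof is the paper's proof.
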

\begin{proof}
A simple induction on $\pair{\psi}{r}$ suffices\footnote{If $\psi = \mu(\psi_1, \ldots, \psi_m)$ or $\psi = f(\psi_1, \ldots, \psi_m)$, then $\psi_i$ is not smaller than $\psi$ \wrt\ its ordinal number layer if $\psi$ is a multistep; this is the reason to consider induction on pairs, adding $r$ as the second component.}. If $\psi = \mu(\psi_1, \ldots, \psi_m)$ and $r = r_1 r_2$, recall that $r_1 \neq \epsilon$, then we conclude by induction on $\pair{\psi_i}{r_2}$. If $\psi = f(\psi_1, \ldots, \psi_m)$, so that $r = i r'$, then induction on $\pair{\psi_i}{r'}$ suffices to conclude.

Assume that $\psi = \psi_1 \comp \psi_2$. 
If $p = 1p'$, implying $\pair{r}{p'} \in \esrs{\psi_1}$, then \ih\ applies to $\pair{\psi_1}{r}$. Recalling that $src(\psi) = src(\psi_1)$, the conclusions of the \ih\ suffice to conclude.
If $p = 2p'$,  implying $\pair{r}{p'} \in \esrs{\psi_2}$, then \ih\ on $\pair{\psi_2}{r}$ yields that $\psi(p) = \psi_2(p') = \mu$, and also that $\subtat{src(\psi_2)}{r} = \subtat{tgt(\psi_1)}{r} = l[t_1, \ldots, t_k]$ for some $t_1, \ldots, t_k$.
In turn, $\pair{r}{2p'} \in \esrs{\psi}$ implies that $\psi_1$ respects $\set{r' \setsthat r' < r} \cup (r \cdot \PPos{l})$, so that $\subtat{src(\psi)}{r} = \subtat{src(\psi_1)}{r} = l[s_1, \ldots, s_k]$.

If $\psi = \icomp \psi_i$, then an argument similar to that given for the previous case, where $p = 2^j 1 p'$ instead of $p = 2 p'$ suffices; an iteration over $\langle \psi_{j-1}, \ldots, \psi_0 \rangle$ is required to verify $\subtat{src(\psi_1)}{r} = l[s_1, \ldots, s_k]$.
\end{proof}

\vspace{-1ex}
The elements of $\esrs{\psi}$ correspond to the steps that can be extracted, \ie, applied to $src(\psi)$. The following definition formalises the notion of applying a rule symbol occurrence to a term.
Let $t$ be a term, $r$ a position, and $\mu: l \to h$ a rule, such that $\subtat{t}{r} = l[t_1, \ldots, t_m]$.
We define the \emph{insertion of $\mu$ into $t$ at position $r$} as follows: 
$\insertrs{t}{\mu}{r} \eqdef \repl{t}{\mu(t_1, \ldots, t_m)}{r}$.
\subsection{Basic properties}
\label{sec:basic-properties}

\vspace{-1.6ex}
In order to prove the main result of this section, some basic properties of explicit fixed-prefix forms, easily extractable steps, and projections are required.
We will state these auxiliary results, along with some description. 
Their proofs, straightforward once the proper induction principle is determined, are given in \cite{long-version}.
%the Appendix.
%We note that all these results admit proofs by induction on different notions (proof terms, contexts, contraction positions or positions in a proof term), where once the right induction principle is determined, the proof does not involve significant technical challenges.

First, we verify that the explicit fixed-prefix forms of a proof term, as defined in Section~\ref{sec:projection}, are equivalent to that proof term.

\begin{lemma}
\label{rsl:cfpceq-well-defined}
Let $\psi$ be a proof term, and $C$ a context such that $C$ is a fixed prefix for $\psi$.
Then $\cfpceq{\psi}{C} = C[\psi_1, \ldots, \psi_m]$, and $\psi \peqb \cfpceq{\psi}{C}$. Moreover, these proof terms are \emph{structurally} equivalent, \ie, a \peqence\ derivation exists whose conclusion is $\cfpceq{\psi}{C} \peqb \psi$ and where neither \peqinout\ nor \peqoutin\ are used.
\end{lemma}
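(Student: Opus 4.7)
My plan is to prove the lemma by induction on $\psi$, with a secondary induction on the size of $C$ within each step; the defining clauses of $\cfpceq{\psi}{C}$ are well-founded because each recursive call decreases either the proof term's layer (in the composition clauses) or the context's size (in the function-symbol clause), but never increases both at once. The ``computational'' part $\cfpceq{\psi}{C} = C[\psi_1, \ldots, \psi_m]$ for suitable filler proof terms $\psi_1, \ldots, \psi_m$ is routine and falls out directly from the four defining clauses by the same induction.

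For the equivalence, the base case $C = \Box$ is immediate, since $\cfpceq{\psi}{\Box} = \psi$ by definition and reflexivity trivially qualifies as a structural derivation. In the inductive case $C = f(C_1, \ldots, C_m)$, I split on the form of $\psi$. If $\psi$ itself has root $f$, say $\psi = f(\psi_1, \ldots, \psi_m)$ (whether as an infinitary multistep or as a compound proof term), then each $C_i$ is a fixed prefix for $\psi_i$ and strictly smaller than $C$, so the secondary IH yields $\psi_i \peqb \cfpceq{\psi_i}{C_i}$ with a derivation avoiding \peqinout\ and \peqoutin, and congruence lifts this to $\psi \peqb \cfpceq{\psi}{C}$. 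If $\psi = \psi_1 \comp \psi_2$, I apply the primary IH to each structurally simpler $\psi_i$ with the context $f^\Box$ to obtain $\psi_i \peqb f(\psi_{i1}, \ldots, \psi_{im})$ structurally; closing under congruence gives $\psi_1 \comp \psi_2 \peqb f(\psi_{11}, \ldots, \psi_{1m}) \comp f(\psi_{21}, \ldots, \psi_{2m})$, and \peqstruct\ then rewrites this to $f(\psi_{11} \comp \psi_{21}, \ldots, \psi_{1m} \comp \psi_{2m})$; a final application of the IH with each $C_j$ to the inner composition $\psi_{1j} \comp \psi_{2j}$ and congruence close the case. The infinite-composition case $\psi = \icomp \psi_i$ proceeds analogously, using \eqlinfcomp\ to push congruence through the infinite composition and \peqinfstruct\ in place of \peqstruct.

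The main obstacle is the bookkeeping of the mutual recursion between $\psi$ and $C$ and the verification that \peqinout\ and \peqoutin\ are never invoked. Well-foundedness is settled by the lexicographic measure on (layer of $\psi$, $|C|$): the composition clauses strictly decrease the layer of $\psi$ while keeping $C$ the same (replacing $C$ by $f^\Box$ which is no larger when $C = f^\Box$, otherwise by a smaller context), and the function-symbol clause keeps $\psi$'s layer fixed but strictly shrinks $C$. For the rule-tracking condition, the derivations constructed above only use reflexivity, congruence, \eqlinfcomp, \peqstruct, and \peqinfstruct; a direct syntactic inspection shows that none of these is \peqinout\ or \peqoutin, so the structural equivalence claim is witnessed explicitly by the derivation produced by the induction.
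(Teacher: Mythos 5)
Your argument is essentially the paper's own proof: induction on the pair (proof term, context), with the trivial base case $C = \Box$, the congruence step when $\psi$ has root $f$, and, for (binary or infinite) compositions, first flattening each component via the IH at $f^\Box$, recombining with \peqstruct\ (resp.\ \eqlinfcomp\ and \peqinfstruct) and congruence, then applying the IH again with the smaller contexts $C_j$ to the inner compositions; the observation that only these rules are used gives the structural-equivalence claim, exactly as in the paper. The one real difference is bookkeeping: the paper orders the induction pair as $\pair{C}{\psi}$ (context first), so the second IH application, to $\pair{C_j}{\psi_{1j} \comp \psi_{2j}}$, is justified simply because $C_j$ is a proper subcontext of $C$, with no claim needed about the new proof term; your lexicographic measure puts the layer of $\psi$ first, and then that same IH application needs the layer of $\psi_{1j} \comp \psi_{2j}$ not to exceed that of $\psi_1 \comp \psi_2$, i.e.\ that extracting the components of $\cfpceq{\psi_i}{f^\Box}$ never raises the layer -- a plausible fact, but one you neither state nor prove, and your summary (``the composition clauses strictly decrease the layer while keeping $C$ the same'') is not accurate for that main recursive call, only for the side calls at $f^\Box$. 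Two smaller omissions: before invoking the IH on $\pair{C_j}{\psi_{1j} \comp \psi_{2j}}$ you must know that $C_j$ is a fixed prefix for $\psi_{1j} \comp \psi_{2j}$, which the paper gets from $\psi \peqb f(\psi_{11} \comp \psi_{21}, \ldots, \psi_{1m} \comp \psi_{2m})$ together with the stability of the fixed-prefix property under \peqence; and the claim $\cfpceq{\psi}{C} = C[\psi_1, \ldots, \psi_m]$ in the composition cases is itself obtained from those inner IH applications rather than ``directly from the defining clauses''. None of this changes the substance -- with the induction pair ordered as in the paper (or with the layer-monotonicity of $\cfpceqfn$ spelled out), your proof goes through and the derivations indeed avoid \peqinout\ and \peqoutin.
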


The next result states that easily extractable steps are compatible with explicit fixed-prefix forms, where \emph{the contraction position does not change}.
\Eg, consider $\psi = m(f(\pi)) \comp m(\mu(b))$, so that $\cfpceq{\psi}{m(\Box)} = m(f(\pi) \comp \mu(b))$. We have $\pair{1}{21} \in \esrs{\psi}$, denoting that the $\mu$-step at position 21 is easily extractable to the position 1; note that $src(\psi) = m(f(a))$. The element of $\esrs{\cfpceq{\psi}{m(\Box)}}$ for the same step is $\pair{1}{12}$. The position of the rule symbol changed, while the contraction position is the same.

\begin{lemma}
\label{rsl:esrs-trace-forward}
Let $\psi$ be a proof term and $f$ a function symbol, such that $f^\Box$ is a fixed prefix for $\psi$, and $r, p$ such that $\pair{r}{p} \in \esrs{\psi}$.
Then there exists $q \in \Pos{\cfpceq{\psi}{f^\Box}}$ such that 
$(\cfpceq{\psi}{f^\Box}) (q) = \psi(p)$ and
$\pair{r}{q} \in \esrs{\cfpceq{\psi}{f^\Box}}$.
\end{lemma}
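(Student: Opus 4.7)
The plan is to prove the lemma by induction on the construction of $\psi$, understood (following the footnote of Lemma~\ref{rsl:esrs-are-es}) as transfinite induction on its ordinal-number layer. The assumption that $f^\Box$ is a fixed prefix of $\psi$ rules out any form $\psi = \mu(\psi_1, \ldots, \psi_m)$, and for the base shape $\psi = f(\psi_1, \ldots, \psi_m)$ the definition of $\cfpceqfn$ gives $\cfpceq{\psi}{f^\Box} = \psi$, so taking $q \eqdef p$ discharges both obligations immediately.

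For the binary composition $\psi = \psi_1 \comp \psi_2$, both components admit $f^\Box$ as fixed prefix, so neither has a rule symbol at the root; consequently the contraction position must be of the form $r = i r'$ for some $i \in \set{1, \ldots, m}$. By the composition clause of $\esrsfn$, either $p = 1 p'$ with $\pair{r}{p'} \in \esrs{\psi_1}$, or $p = 2 p'$ with $\pair{r}{p'} \in \esrs{\psi_2}$ plus the usual \emph{respects} side-condition on $\psi_1$. Applying the IH to the relevant $\psi_j$ produces a position $q'_j = i p''_j$ in $\cfpceq{\psi_j}{f^\Box} = f(\psi_{j1}, \ldots, \psi_{jm})$ satisfying $\pair{r'}{p''_j} \in \esrs{\psi_{ji}}$ and $\psi_{ji}(p''_j) = \psi(p)$. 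The candidate $q \eqdef i j p''_j$ in $\cfpceq{\psi}{f^\Box} = f(\psi_{11} \comp \psi_{21}, \ldots, \psi_{1m} \comp \psi_{2m})$ makes the label identity $(\cfpceq{\psi}{f^\Box})(q) = \psi(p)$ immediate, and reduces the required membership $\pair{r}{q} \in \esrs{\cfpceq{\psi}{f^\Box}}$, via the $f$-clause of $\esrsfn$, to $\pair{r'}{j p''_j} \in \esrs{\psi_{1i} \comp \psi_{2i}}$.

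The subcase $j = 1$ is then immediate from the composition clause. The main obstacle is the subcase $j = 2$, where the clause additionally requires $\psi_{1i}$ to respect the set $\set{r''' \setsthat r''' < r'} \cup (r' \cdot \PPos{\psi_{2i}(p''_2)})$. This is derived from the \emph{respects} hypothesis on $\psi_1$ in two moves. Lemma~\ref{rsl:cfpceq-well-defined} gives $\psi_1 \peqb \cfpceq{\psi_1}{f^\Box} = f(\psi_{11}, \ldots, \psi_{1m})$; together with the stability of \emph{respects} under \peqence\ (item~(3) of the list preceding the definition of $\esrsfn$), the property transfers to $f(\psi_{11}, \ldots, \psi_{1m})$, and the $f(\cdot)$-clause of \emph{respects} then yields the property for $\psi_{1i}$ applied to $\proj{\Spa}{i}$, where $\Spa = \set{r'' \setsthat r'' < r} \cup (r \cdot \PPos{\psi_2(p')})$. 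A position-level calculation using $r = i r'$ and the IH-supplied identity $\psi_2(p') = \psi_{2i}(p''_2)$ gives $\proj{\Spa}{i} = \set{r''' \setsthat r''' < r'} \cup (r' \cdot \PPos{\psi_{2i}(p''_2)})$, which is exactly what is needed.

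The remaining case $\psi = \icomp \psi_k$ is structurally analogous: a position $p = 2^j 1 p'$ locates the step inside $\psi_j$, and after the IH is applied to $\psi_j$ the candidate $q \eqdef i 2^j 1 p''_j$ targets the appropriate subterm inside the $i$-th child of $\cfpceq{\psi}{f^\Box}$ (which, by definition, is an infinite composition whose $k$-th summand is $\psi_{ki}$). The \emph{respects} obligation must be transferred from each $\psi_k$ with $k < j$ to $\psi_{ki}$, uniformly across $k$, by the same combination of Lemma~\ref{rsl:cfpceq-well-defined} and \peqence-stability of \emph{respects}. The transfinite formulation of the induction is essential here, since each $\psi_k$ has strictly smaller ordinal layer than $\icomp \psi_k$.
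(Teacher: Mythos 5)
Your proof is correct and takes essentially the same route as the paper's: induction on $\psi$, taking $q=p$ in the $f(\ldots)$ case, applying the IH to the relevant component of a (binary or infinite) composition and using the $f$-clause of $\esrsfn$ to split $r$ and the obtained position, and discharging the \emph{respects} side-condition in the $j=2$ and $\icomp$ cases via Lemma~\ref{rsl:cfpceq-well-defined} together with \peqence-stability of \emph{respects}. The only cosmetic difference is your upfront remark that $r = i r'$ because no component has a root rule symbol, which is redundant since this decomposition already falls out of the IH, exactly as in the paper.
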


The following lemmas state that projections behave as expected in two straightforward cases: the projection of one step over a reduction that respects the set of pattern positions of the left-hand side of the corresponding rule; and the projection of one step over a reduction that includes that step.

\begin{lemma}
\label{rsl:res-step-over-respects}
Let $\mu: l \to h$ be a rule, $r$ a position, and $\psi$ a proof term, such that $\psi$ respects $\set{r' \setsthat r' < r} \cup (r \cdot \PPos{\mu})$, $\subtat{src(\psi)}{r} = l[s_1, \ldots, s_m]$, and consequently $\subtat{tgt(\psi)}{r} = l[t_1, \ldots, t_m]$.
Then 
$\insertrs{src(\psi)}{\mu}{r} \wideres \psi = \insertrs{tgt(\psi)}{\mu}{r}$.
\end{lemma}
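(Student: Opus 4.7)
The plan is induction on the length of the position $r$, using the projection clauses to peel off the outer structure of the inserted left argument and invoke clause~\ref{it:proj-id} on the subterms that $\psi$ leaves fixed.

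For the base case $r = \epsilon$, the hypothesis becomes that $\psi$ respects $\PPos{\mu}$. Since $\epsilon \in \PPos{\mu}$ and the symbol $l(\epsilon)$ is a function symbol, $\psi$ cannot have any rule symbol at the root of any composition path, and therefore $l^\Box$ is a fixed prefix for $\psi$. Writing $\cfpceq{\psi}{l^\Box} = l[\psi_1, \ldots, \psi_m]$ and using Lemma~\ref{rsl:cfpceq-well-defined}, one obtains $s_i = src(\psi_i)$ by comparing $src(\psi) = l[s_1,\ldots,s_m]$ with $src(\cfpceq{\psi}{l^\Box}) = l[src(\psi_1),\ldots,src(\psi_m)]$. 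Now $\insertrs{src(\psi)}{\mu}{\epsilon} = \mu(s_1,\ldots,s_m)$, so clause~\ref{it:proj-mul} delivers $\mu(s_1 \res \psi_1, \ldots, s_m \res \psi_m)$; each $s_i \res \psi_i = src(\psi_i) \res \psi_i$ collapses to $tgt(\psi_i) = t_i$ by clause~\ref{it:proj-id}, yielding $\mu(t_1,\ldots,t_m) = \insertrs{tgt(\psi)}{\mu}{\epsilon}$.

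For the inductive step $r = ir'$: since $\epsilon$ lies in the respected set, $src(\psi) = f(s'_1,\ldots,s'_n)$ for some function symbol $f$, and $f^\Box$ is a fixed prefix for $\psi$. Writing $\cfpceq{\psi}{f^\Box} = f(\psi_1,\ldots,\psi_n)$ one has $s'_j = src(\psi_j)$ and $tgt(\psi) = f(tgt(\psi_1),\ldots,tgt(\psi_n))$. The proof term $\insertrs{src(\psi)}{\mu}{r} = f(s'_1,\ldots, \insertrs{s'_i}{\mu}{r'},\ldots, s'_n)$ is a multistep with $f$ at its root, so it does not include head steps; $\psi$ does not include head steps either, since it respects $\epsilon$. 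Consequently clauses~\ref{it:proj-id}--\ref{it:proj-comp-right} all fail and clause~\ref{it:proj-ff} applies. Each component with $j \neq i$ collapses via clause~\ref{it:proj-id} to $tgt(\psi_j)$; the $i$-th component $\insertrs{src(\psi_i)}{\mu}{r'} \res \psi_i$ is handled by the inductive hypothesis, giving $\insertrs{tgt(\psi_i)}{\mu}{r'}$. Reassembling under $f$ produces exactly $\insertrs{tgt(\psi)}{\mu}{r}$.

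The main obstacle, and where the argument requires a supporting fact, is the verification that the component $\psi_i$ appearing in the inductive step actually respects $\set{p \setsthat p < r'} \cup (r' \cdot \PPos{\mu})$, which equals $\proj{\Spa}{i}$ where $\Spa$ is the set respected by $\psi$. The needed property of fixed-prefix decomposition is: if $\psi$ respects $\Spa$ with $\epsilon \in \Spa$ and $\cfpceq{\psi}{f^\Box} = f(\psi_1,\ldots,\psi_n)$, then each $\psi_j$ respects $\proj{\Spa}{j}$. This can be established by combining Lemma~\ref{rsl:cfpceq-well-defined} with the preservation of the \emph{respects} predicate under \peqence\ (the third property of respects recalled just before the formal definition of $\esrsfn$), or directly by induction on the clauses defining $\cfpceqfn$, exploiting that binary and infinite composition preserve respecting componentwise by definition.
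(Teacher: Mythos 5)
Your proof is correct and follows essentially the same route as the paper's: induction on $r$, with the base case handled via the fixed prefix $l^\Box$ and clauses~\ref{it:proj-mul} and~\ref{it:proj-id}, and the step case via the fixed prefix $f^\Box$, clauses~\ref{it:proj-ff} and~\ref{it:proj-id}, and the induction hypothesis on the $i$-th component, using Lemma~\ref{rsl:cfpceq-well-defined} together with the stability of the \emph{respects} predicate under \peqence\ to transfer the hypothesis to $\psi_i$. The only difference is that you spell out the clause-ordering check and the componentwise transfer of \emph{respects} in more detail than the paper does, which is a welcome refinement rather than a deviation.
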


\begin{lemma}
\label{rsl:res-esrs-over-pterm}
Whenever $\pair{r}{p} \in \esrs{\psi}$, we have
$\insertrs{src(\psi)}{\psi(p)}{r} \wideres \psi = tgt(\psi)$.
\end{lemma}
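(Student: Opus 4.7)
The plan is induction on the pair $\pair{\psi}{r}$ (ordered lexicographically by the ordinal layer of $\psi$ and the length of $r$), mirroring the case analysis used in Lemma~\ref{rsl:esrs-are-es} and the four clauses defining $\esrsfn$. Throughout, write $\chi \eqdef \insertrs{src(\psi)}{\psi(p)}{r}$; by Lemma~\ref{rsl:esrs-are-es}, $\chi$ is a well-formed multistep containing exactly one rule symbol occurrence, at position $r$.

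The two multistep cases are direct. If $\psi = \mu(\psi_1, \ldots, \psi_m)$ with $\mu: l \to h$, then either $\pair{r}{p} = \pair{\epsilon}{\epsilon}$, in which case $\chi = \mu(src(\psi_1), \ldots, src(\psi_m))$ and clauses~\ref{it:proj-mumu} and~\ref{it:proj-id} give $\chi \res \psi = h[tgt(\psi_1), \ldots, tgt(\psi_m)] = tgt(\psi)$; or $p = ip'$ with $\pair{r_2}{p'} \in \esrs{\psi_i}$ and the extraction position $r_1 r_2$ sitting strictly inside the $i$-th variable-image of $src(\psi)$. In the latter subcase, $l^\Box$ is a fixed prefix for $\chi$, so clause~\ref{it:proj-lmu} reduces the projection to $h[\ldots, \insertrs{src(\psi_i)}{\psi_i(p')}{r_2} \res \psi_i, \ldots]$, which resolves to $tgt(\psi)$ by the IH on $\psi_i$ for the distinguished component and clause~\ref{it:proj-id} for the others. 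The case where $\psi$ is a multistep with a function symbol at the root is analogous, via clause~\ref{it:proj-ff}.

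For $\psi = \psi_1 \comp \psi_2$, the definition of $\esrsfn$ gives two subcases. If $p = 1p'$, then the IH on $\psi_1$ yields $\chi \res \psi_1 = tgt(\psi_1) = src(\psi_2)$, and clause~\ref{it:proj-id} dispatches the tail: $src(\psi_2) \res \psi_2 = tgt(\psi_2) = tgt(\psi)$. If $p = 2p'$, the $\esrsfn$-clause guarantees that $\psi_1$ respects the set $\set{r' \setsthat r' < r} \cup (r \cdot \PPos{\psi_2(p')})$, so Lemma~\ref{rsl:res-step-over-respects} provides $\chi \res \psi_1 = \insertrs{src(\psi_2)}{\psi_2(p')}{r}$, after which the IH on $\psi_2$ gives $tgt(\psi_2) = tgt(\psi)$. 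Both halves are assembled via clause~\ref{it:proj-comp-right}. The infinite-composition case $\psi = \icomp \psi_i$ with $p = 2^j 1 p'$ runs the same argument, iterating Lemma~\ref{rsl:res-step-over-respects} along $\psi_0, \ldots, \psi_{j-1}$ to transport $\chi$ down to $\insertrs{src(\psi_j)}{\psi_j(p')}{r}$, applying the IH on $\psi_j$, and finally using clause~\ref{it:proj-id} to absorb the tail $\icomp \psi_{j+1+i}$.

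The main obstacle is that clause~\ref{it:proj-comp-right} is conditional: it fires only when either the composition or $\chi$ includes a head step. When neither does, the extracted step must lie strictly below some function symbol $f$ that tops $src(\psi)$ and every intermediate term of $\psi$, so both $\chi$ and $\psi$ admit $f^\Box$ as a fixed prefix. I first apply clause~\ref{it:proj-ff} using the explicit fixed-prefix forms $\cfpceq{\chi}{f^\Box}$ and $\cfpceq{\psi}{f^\Box}$ to descend one function layer, then invoke the IH with strictly shorter $r$ on the unique component that still carries the rule symbol; Lemma~\ref{rsl:esrs-trace-forward} transports the datum $\pair{r}{p}$ to a witness of easy-extractability inside $\cfpceq{\psi}{f^\Box}$, so the IH does apply. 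This auxiliary descent is precisely what forces the induction to be on $\pair{\psi}{r}$ rather than on the structure of $\psi$ alone.
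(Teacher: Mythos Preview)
Your proposal is correct and follows essentially the same route as the paper: the same case split (rule symbol at the root, composition with head steps, fixed-prefix descent via clause~\ref{it:proj-ff} with Lemma~\ref{rsl:esrs-trace-forward}) and the same auxiliary appeals to Lemma~\ref{rsl:res-step-over-respects} and clause~\ref{it:proj-id}. The only noteworthy difference is the induction measure: the paper inducts on $\pair{r}{p}$ lexicographically, which lets the (binary or infinite) composition case be handled uniformly by peeling off one component at a time (the second coordinate $p$ strictly shortens), whereas your measure $\pair{\psi}{r}$ by ordinal layer forces you to unfold the infinite composition down to $\psi_j$ in one go before invoking the hypothesis; both work.
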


\subsection{Main results}
\label{sec:property-main-results}
We prove that the projection behaves as expected, in the sense described at the beginning of this Section, \ie\ that $\psi \comp \phi \res \psi \peqb \phi \comp \psi \res \phi$, in two situations in which $\phi$ is a one-step. Firstly, if $\psi$ does not interfere with $\phi$, that is, if the activity described by $\psi$ neither overlaps nor embeds the step described by $\phi$. Secondly, if $\phi$ is an easily extractable step for $\psi$.

\begin{lemma}
\label{rsl:proj-over-irs-peqb-if-respects}
Let $\psi$ be a proof term, $\mu: l \to h$ a rule symbol, and $r$ a position, such that $\psi$ respects $\set{r' \setsthat r' < r} \cup (r \cdot \PPos{\mu})$ and $\subtat{src(\psi)}{r} = l[s_1, \ldots, s_m]$.
Then
$\insertrs{src(\psi)}{\mu}{r} \comp \psi \wideres \insertrs{src(\psi)}{\mu}{r}
  \peqb \psi \comp \insertrs{tgt(\psi)}{\mu}{r} = \psi \comp \insertrs{src(\psi)}{\mu}{r} \wideres \psi$; \confer\ Lemma~\ref{rsl:res-step-over-respects}.
\end{lemma}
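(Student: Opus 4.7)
The second equality in the statement is immediate from Lemma~\ref{rsl:res-step-over-respects} applied to $\psi$, $\mu$, $r$, so the substance lies in the $\peqb$-equivalence. Write $\phi \eqdef \insertrs{src(\psi)}{\mu}{r}$ and $\phi' \eqdef \insertrs{tgt(\psi)}{\mu}{r}$, and let $C$ be the context whose non-hole positions are exactly $\set{r' \setsthat r' < r} \cup r \cdot \PPos{\mu}$---function symbols on the path down to $r$ together with the non-variable skeleton of the pattern $l$ at $r$---with holes at the $m$ variable positions of $l$ placed at $r$ and at the siblings of the ancestors of $r$. The respect hypothesis makes $C$ a fixed prefix for $\psi$, so Lemma~\ref{rsl:cfpceq-well-defined} yields $\psi \peqb \cfpceq{\psi}{C} = C[\phi'_1, \ldots, \phi'_m, \sigma_1, \ldots, \sigma_n]$ structurally, with $\phi'_i$ at the $i$-th variable slot (hence $src(\phi'_i) = s_i$) and $\sigma_j$ at the $j$-th sibling slot. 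The step $\phi$ itself admits the analogous decomposition $\phi = D[src(\sigma_1), \ldots, src(\sigma_n), s_1, \ldots, s_m]$, where $D$ is obtained from $C$ by replacing the $l$-skeleton at $r$ with $\mu$.

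My plan is to rewrite both $\psi \comp \phi'$ and $\phi \comp (\psi \res \phi)$ via $\peqb$ to a single normal form: the proof term carrying $\mu(\phi'_1, \ldots, \phi'_m)$ as a multistep at position $r$ and each $\sigma_j$ at the $j$-th sibling slot. For $\psi \comp \phi'$ I push the composition down through the shared function-symbol prefix by repeated use of $\peqstruct$ (and $\peqinfstruct$ whenever $\psi$ involves infinite compositions); the sibling compositions $\sigma_j \comp tgt(\sigma_j)$ collapse by $\peqidright$; finally $l[\phi'_1, \ldots, \phi'_m] \comp \mu(tgt(\phi'_1), \ldots, tgt(\phi'_m))$ at $r$ combines into $\mu(\phi'_1, \ldots, \phi'_m)$ by $\peqinout$. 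For $\phi \comp (\psi \res \phi)$ the symmetric manipulation uses $\peqidleft$ at sibling slots and $\peqoutin$ at $r$, combining $\mu(s_1, \ldots, s_m) \comp h[\phi'_1, \ldots, \phi'_m]$ into the same target. This last step presupposes the \emph{key claim}: $\psi \res \phi \peqb E[\phi'_1, \ldots, \phi'_m, \sigma_1, \ldots, \sigma_n]$, where $E$ is obtained from $C$ by substituting the right-hand side pattern $h$ for $l$ at $r$.

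I would establish the key claim by induction on the depth $|r|$. For the base case $r = \epsilon$, projection clause~\ref{it:proj-lmu} applies and yields $\psi \res \phi = h[\phi'_1 \res s_1, \ldots, \phi'_m \res s_m]$; each $\phi'_i \res s_i$ collapses to $\phi'_i$ by clause~\ref{it:proj-id}. For the inductive step $r = i r''$, clause~\ref{it:proj-ff} applies: at the $k$-th slot for $k \neq i$ the projection collapses to an identity via clause~\ref{it:proj-id}, while at the $i$-th slot the inductive hypothesis takes over, with position $r''$ and the $i$-th component of $\cfpceq{\psi}{f^\Box}$, whose respect set is precisely $\set{r' \setsthat r' < r''} \cup r'' \cdot \PPos{\mu}$. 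The principal obstacle in both cases is justifying clause selection: one must verify that the composition-handling clauses~\ref{it:proj-comp-left} and~\ref{it:proj-comp-right} never preempt clauses~\ref{it:proj-lmu} and~\ref{it:proj-ff} when $\psi$ is a binary or infinite composition. This follows because $l(\epsilon)$ is a function symbol, so by the respect hypothesis the root of $\psi$---and of every composition component nested within $\psi$---must be that same function symbol, so no such component ``includes head steps''.
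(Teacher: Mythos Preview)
Your proposal is correct, and at its core it runs the same induction on $r$ that the paper does: base case via clause~\ref{it:proj-lmu} and $\peqoutin$/$\peqinout$, inductive step via clause~\ref{it:proj-ff} and the inductive hypothesis on the $i$-th component of $\cfpceq{\psi}{f^\Box}$. Your treatment of clause selection (ruling out clauses~\ref{it:proj-comp-left} and~\ref{it:proj-comp-right} because the respect hypothesis forces a function symbol at the root, hence no head steps) is exactly the point the paper leaves implicit.

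The organisational difference is that the paper performs a single induction on $r$, peeling off one layer $f^\Box$ or $l^\Box$ at a time and doing both the projection computation and the $\peqb$ manipulation in that same step. You instead front-load a full decomposition $\psi \peqb C[\phi'_1,\ldots,\sigma_1,\ldots]$ via Lemma~\ref{rsl:cfpceq-well-defined}, posit a common normal form carrying $\mu(\phi'_1,\ldots,\phi'_m)$ at position $r$, and then reach it from each side separately. This is conceptually pleasant---the common normal form is a clear target---but it costs you three passes (pushing $\psi \comp \phi'$ down through $C$; the key claim; pushing $\phi \comp (\psi\res\phi)$ down) where the paper does one. Two minor remarks: (i) once you have replaced $\psi$ by $C[\ldots]$, the context $C$ contains only function symbols, so iterated $\peqstruct$ suffices and your mention of $\peqinfstruct$ is unnecessary; (ii) your inductive step for the key claim silently relies on the fact that the $\sigma_j$ and $\phi'_k$ arising from the full-$C$ decomposition coincide with those obtained by iterating one-level $f^\Box$ decompositions---this holds by the recursive shape of the definition of $\cfpceqfn$, but is worth making explicit. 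In fact the key claim holds with equality rather than mere $\peqb$.
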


\begin{proof} 
We give only a sketch here, the full details can be found in \cite{long-version}.
The statement can be proved by induction on $r$.

If $r = \epsilon$, then $l^\Box$ is a fixed prefix for $\psi$, so that we can consider $\cfpceq{\psi}{l^\Box} = l[\psi_1, \ldots, \psi_m]$; let $src(\psi_i) = s_i$ and $tgt(\psi_i) = t_i$ for all $i$.
It is easy to obtain $\insertrs{src(\psi)}{\mu}{r} = \mu(s_1, \ldots, s_m)$ and $\insertrs{tgt(\psi)}{\mu}{r} = \mu(t_1, \ldots, t_m)$.
Then 
$\insertrs{src(\psi)}{\mu}{r} \comp \psi \wideres \insertrs{src(\psi)}{\mu}{r}
	= \mu(s_1, \ldots, s_m) \comp h[\psi_1, \ldots, \psi_m] 
	\peqb \mu(\psi_1, \ldots, \psi_m)
	\peqb l[\psi_1, \ldots, \psi_m] \comp \mu(t_1, \ldots, t_m)$.
	
If $r = i r_1$, then $f^\Box$ is a fixed prefix for $\psi$ for some $f$, so that we have $\cfpceq{\psi}{f^\Box} = f(\psi_1, \ldots, \psi_m)$. If $src(\psi_i) = s_i$ for all $i$, then $\insertrs{src(\psi)}{\mu}{r} = f(s_1, \ldots, \insertrs{src(\psi_i)}{\mu}{r_1}, \ldots, s_m)$ and similarly for $\insertrs{tgt(\psi)}{\mu}{r}$. It turns out that \ih\ can be applied on $\psi_i$ which, along with structural equivalence, suffices to conclude.
\end{proof}

\begin{proposition}
\label{rsl:proj-over-irs-peqb}
Let $\psi$ be a proof term, and $\pair{r}{p} \in \esrs{\psi}$.
Then $\insertrs{src(\psi)}{\psi(p)}{r} \comp (\psi \wideres \insertrs{src(\psi)}{\psi(p)}{r}) \peqb \psi \peqb \psi \comp (\insertrs{src(\psi)}{\psi(p)}{r} \res \psi)$; \confer\ Lemma~\ref{rsl:res-esrs-over-pterm}.
\end{proposition}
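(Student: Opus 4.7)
The plan is to prove the two equivalences separately. The right-hand one is immediate: Lemma~\ref{rsl:res-esrs-over-pterm} gives $\insertrs{src(\psi)}{\psi(p)}{r} \res \psi = tgt(\psi)$, whence $\psi \comp tgt(\psi) \peqb \psi$ by the reduction identity \peqidright. For the left-hand equivalence, I would proceed by induction on $\pair{\psi}{r}$ following the defining clauses of $\esrsfn$, analogously to the induction principle used in Lemma~\ref{rsl:esrs-are-es}. Throughout, write $\mu = \psi(p) : l \to h$ and $\phi = \insertrs{src(\psi)}{\mu}{r}$.

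The base case $\psi = \mu(\psi_1, \ldots, \psi_m)$ with $\pair{r}{p} = \pair{\epsilon}{\epsilon}$ is direct: clauses~\ref{it:proj-mumu} and~\ref{it:proj-id} compute $\psi \res \phi = h[\psi_1, \ldots, \psi_m]$, and \peqoutin\ yields $\phi \comp (\psi \res \phi) \peqb \psi$. For the inductive cases $\psi = \mu_0(\psi_1, \ldots, \psi_m)$ with $r \neq \epsilon$ and $\psi = f(\psi_1, \ldots, \psi_m)$, I would apply clauses~\ref{it:proj-mul} and~\ref{it:proj-ff} respectively to reduce $\psi \res \phi$ to an expression in which only the $i$-th subterm is affected (through a recursive call $\psi_i \res \phi_i$, with $\phi_i$ the insertion of the step into $\psi_i$); the \ih\ on $\psi_i$, combined with \peqstruct\ applied along $l_0^\Box$ or $f^\Box$ and with \peqidleft\ to absorb the idle factors $src(\psi_j) \comp \psi_j$, closes each case---with a double use of \peqinout\ in the rule-symbol case to permute $\mu_0$ past $l_0[\psi_1, \ldots, \psi_m]$.

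The crucial case is $\psi = \psi_1 \comp \psi_2$ with the step coming from $\psi_2$, i.e.\ $p = 2p'$ and $\pair{r}{p'} \in \esrs{\psi_2}$; the definition of $\esrsfn$ records in this case that $\psi_1$ respects $\set{r' \setsthat r' < r} \cup (r \cdot \PPos{\mu})$. Setting $\phi' = \insertrs{tgt(\psi_1)}{\mu}{r}$, Lemma~\ref{rsl:res-step-over-respects} yields $\phi \res \psi_1 = \phi'$, and---assuming the relevant projection clause~\ref{it:proj-comp-left} applies---$\psi \res \phi = (\psi_1 \res \phi) \comp (\psi_2 \res \phi')$. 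The chain
\[
\phi \comp (\psi \res \phi) \ \peqb\ \bigl(\phi \comp (\psi_1 \res \phi)\bigr) \comp (\psi_2 \res \phi') \ \peqb\ (\psi_1 \comp \phi') \comp (\psi_2 \res \phi') \ \peqb\ \psi_1 \comp \psi_2 \ =\ \psi
\]
closes the case, combining \peqassoc, Lemma~\ref{rsl:proj-over-irs-peqb-if-respects} on the first composition, and the \ih\ on $\psi_2$ on the second. The subcase $p = 1p'$ is handled by the \ih\ on $\psi_1$ together with identity-absorption for the $\psi_2$-part of the projection, and the infinite-composition case $\psi = \icomp \psi_i$ with $p = 2^j 1 p'$ follows the same pattern, iterating the respects-based factorisation across $\psi_0, \ldots, \psi_{j-1}$. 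The hardest point will be verifying in every configuration that the intended projection clause actually applies---this reduces to checking either the presence of head steps or the existence of a common fixed-prefix context for $\psi$ and $\phi$, both of which should follow from the shape of $\phi$ (a single inserted step) and from the mutual-orthogonality assumption introduced in Section~\ref{sec:projection}.
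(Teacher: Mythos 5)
The parts of your plan that you spell out match the paper's proof closely: the same key lemmas (Lemmas~\ref{rsl:res-step-over-respects}, \ref{rsl:res-esrs-over-pterm} and \ref{rsl:proj-over-irs-peqb-if-respects}), the same clause-by-clause computation of the projections, and the same \peqence\ manipulations; handling the right-hand equivalence via Lemma~\ref{rsl:res-esrs-over-pterm} and \peqidright\ is also as intended. The genuine gap is the case of a composition that does \emph{not} include head steps. You route every composition through clause~\ref{it:proj-comp-left}, but that clause applies only when the composition includes head steps, and this is a property of $\psi$ alone: it follows neither from the shape of $\phi$ (a single inserted step) nor from mutual orthogonality, contrary to what your closing sentence suggests. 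For instance, with $\mu: f(x) \to g(x)$ and $\nu: g(x) \to k(x)$, take $\psi = m(\mu(a)) \comp m(\nu(a))$ and $\pair{1}{11} \in \esrs{\psi}$; then $\psi$ has no head steps, clause~\ref{it:proj-comp-left} is not applicable to $\psi \res m(\mu(a))$, and the projection is instead computed by clause~\ref{it:proj-ff}.

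In that configuration the induction cannot recurse into $\psi_1$ and $\psi_2$ at all: clause~\ref{it:proj-ff} computes the projection through the explicit fixed-prefix form $\cfpceq{\psi}{f^\Box} = f(\chi_1, \ldots, \chi_m)$, so the recursion must go into the arguments $\chi_i$, and two ingredients absent from your sketch become essential: Lemma~\ref{rsl:cfpceq-well-defined}, giving $\psi \peqb \cfpceq{\psi}{f^\Box}$, and Lemma~\ref{rsl:esrs-trace-forward}, which transports the easily extractable step $\pair{r}{p}$ of $\psi$ to some $\pair{r}{q} \in \esrs{\cfpceq{\psi}{f^\Box}}$ with the same rule symbol, so that the \ih\ can be invoked inside the appropriate argument. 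This is why the paper states its last case as ``$f^\Box$ is a fixed prefix for $\psi$'', subsuming both your literal $f(\psi_1, \ldots, \psi_m)$ case and the head-step-free compositions that your syntactic split lets fall through. Two smaller points: your measure $\pair{\psi}{r}$ is delicate for $\icomp \psi_i$, since after peeling $\psi_0, \ldots, \psi_{j-1}$ the remaining tail is not structurally smaller than $\psi$ (and need not itself contain head steps), whereas the paper's measure $\pair{r}{p}$, under which the rule-symbol position strictly shortens, avoids this; and in the $p = 1p'$ subcase the identity absorption you mention needs Lemma~\ref{rsl:res-esrs-over-pterm} to see $\phi \res \psi_1 = tgt(\psi_1) = src(\psi_2)$ before clause~\ref{it:proj-id} applies.
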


\vspace*{-3mm}
\begin{proof}
We proceed by induction on $\pair{r}{p}$.

Assume that $r = p = \epsilon$, so that $\psi = \mu(\psi_1, \ldots, \psi_m)$ and 
$\insertrs{src(\psi)}{\psi(p)}{r} = \mu(src(\psi_1), \ldots, src(\psi_m))$. Say $\mu: l \to h$. We have
$\insertrs{src(\psi)}{\psi(p)}{r} \comp (\psi \wideres \insertrs{\psi}{\psi(p)}{r})
  = \mu(src(\psi_1), \ldots, src(\psi_m)) \comp h[\psi_1 \res src(\psi_1), \ldots, \psi_m \res src(\psi_m)]
	= \mu(src(\psi_1), \ldots, src(\psi_m)) \comp h[\psi_1, \ldots, \psi_m]
	\peqb \psi
$
applying \peqoutin\ in the last step. Note that clause \ref{it:proj-mumu} applies to $\psi \wideres \insertrs{src(\psi)}{\psi(p)}{r}$.

\smallskip
Assume that $\psi = \mu(\psi_1, \ldots, \psi_m)$ where $\mu: l \to h$, and $r \neq \epsilon$. 
In this case, $r = r_1 r_2$, $p = i p_2$, $l(r_1) = x_i$, and $\pair{r_2}{p_2} \in \esrs{\psi_i}$.
Observe that $\insertrs{src(\psi)}{\psi(p)}{r} = l[src(\psi_1), \ldots, \insertrs{src(\psi_i)}{\psi_i(p_2)}{r_2}, \ldots, src(\psi_m)]$.
\Ih\ on $\pair{r_2}{p_2}$ entails 
$\insertrs{src(\psi_i)}{\psi_i(p_2)}{r_2} \comp (\psi_i \wideres \insertrs{src(\psi_i)}{\psi_i(p_2)}{r_2}) \peqb \psi_i$, implying in particular that $tgt(\psi_i \wideres \insertrs{src(\psi_i)}{\psi_i(p_2)}{r_2}) = tgt(\psi_i)$.
We have \\
$
\begin{array}{@{\hspace*{6mm}}cl}
\multicolumn{2}{l}{\insertrs{src(\psi)}{\psi(p)}{r} \comp (\psi \wideres \insertrs{src(\psi)}{\psi(p)}{r})} \arraysep
= & l[src(\psi_1), \ldots, \insertrs{src(\psi_i)}{\psi_i(p_2)}{r_2}, \ldots, src(\psi_m)] \arraysep
     & \comp \ \mu(\psi_1, \ldots, \psi_i \wideres \insertrs{src(\psi_i)}{\psi_i(p_2)}{r_2}, \ldots, \psi_m) \arraysep
\peqb & l[src(\psi_1), \ldots, \insertrs{src(\psi_i)}{\psi_i(p_2)}{r_2}, \ldots, src(\psi_m)] \arraysep
     & \comp \ l[\psi_1, \ldots, \psi_i \wideres \insertrs{src(\psi_i)}{\psi_i(p_2)}{r_2}, \ldots, \psi_m] \arraysep
     & \comp \ \mu(tgt(\psi), \ldots, tgt(\psi_i), \ldots, tgt(\psi_m)) \arraysep
\peqb & l[\psi_1, \ldots, \insertrs{src(\psi_i)}{\psi_i(p_2)}{r_2} \comp (\psi_i \wideres \insertrs{src(\psi_i)}{\psi_i(p_2)}{r_2}), \ldots, \psi_m] \arraysep
      & \comp \ \mu(tgt(\psi), \ldots, tgt(\psi_i), \ldots, tgt(\psi_m)) \arraysep
\peqb & l[\psi_1, \ldots, \psi_i, \ldots, \psi_m] \comp \mu(tgt(\psi), \ldots, tgt(\psi_i), \ldots, tgt(\psi_m)) \ \ 
\peqb \ \  \psi
\end{array}
$ \\
by: definition of projection, where clause \ref{it:proj-mul} applies to $\psi \wideres \insertrs{src(\psi)}{\psi(p)}{r}$ and clause \ref{it:proj-id} to assert $\psi_j \res src(\psi_j) = \psi_j$ if $j \neq i$; \peqinout; structural equivalence including \peqstruct\ and \peqidleft; \ih\ as described above; and finally \peqinout.

\smallskip
Assume $\psi = \psi_1 \comp \psi_2$ is a binary composition that includes head steps. In this case $p = j p_1$, $src(\psi) = src(\psi_1)$, $\psi(p) = \psi_j(p_1)$, and $\pair{r}{p_1} \in \esrs{\psi_j}$. Clause \ref{it:proj-comp-left} applies to $\psi \wideres \insertrs{src(\psi)}{\psi(p)}{r}$, so that 
$\insertrs{src(\psi)}{\psi(p)}{r} \comp (\psi \wideres \insertrs{src(\psi)}{\psi(p)}{r}) = \insertrs{src(\psi)}{\psi(p)}{r} \comp (\psi_1 \wideres \insertrs{src(\psi)}{\psi(p)}{r}) \comp (\psi_2 \wideres (\insertrs{src(\psi)}{\psi(p)}{r} \res \psi_1))$.
\begin{itemize}
\item 
If $j = 1$, then \ih\ on $\pair{r}{p_1}$ yields $\insertrs{src(\psi)}{\psi(p)}{r} \comp (\psi_1 \wideres \insertrs{src(\psi)}{\psi(p)}{r}) \peqb \psi_1$, and Lemma~\ref{rsl:res-esrs-over-pterm} implies $\insertrs{src(\psi)}{\psi(p)}{r} \res \psi_1 = tgt(\psi_1) = src(\psi_2)$.
Consequently, 
$\insertrs{src(\psi)}{\psi(p)}{r} \comp (\psi \wideres \insertrs{src(\psi)}{\psi(p)}{r}) \peqb \psi_1 \comp (\psi_2 \res src(\psi_2)) = \psi$.
\item 
If $j = 2$, recall that $\psi_1$ respects $\set{r' \setsthat r' < r} \cup (r \cdot \PPos{\psi(p)}$. Moreover, Lemma~\ref{rsl:esrs-are-es} implies that $\subtat{src(\psi)}{r} = l[s_1, \ldots, s_m]$. Then Lemma~\ref{rsl:res-step-over-respects} implies $\insertrs{src(\psi)}{\psi(p)}{r} \res \psi_1 = \insertrs{src(\psi_2)}{\psi(p)}{r}$, and Lemma~\ref{rsl:proj-over-irs-peqb-if-respects} entails 
$\insertrs{src(\psi)}{\psi(p)}{r} \comp (\psi_1 \wideres \insertrs{src(\psi)}{\psi(p)}{r}) \peqb \psi_1 \comp \insertrs{src(\psi_2)}{\psi(p)}{r}$.
Consequently, application of clause \ref{it:proj-comp-left} yields
$\insertrs{src(\psi)}{\psi(p)}{r} \comp (\psi \wideres \insertrs{src(\psi)}{\psi(p)}{r}) \peqb \psi_1 \comp \insertrs{src(\psi_2)}{\psi(p)}{r} \comp \psi_2 \wideres \insertrs{src(\psi_2)}{\psi(p)}{r})$.
In turn, \ih\ on $\pair{r}{p_1}$ yields $\insertrs{src(\psi_2)}{\psi(p)}{r} \comp (\psi_2 \wideres \insertrs{src(\psi_2)}{\psi(p)}{r} \peqb \psi_2$; recall that $\psi(p) = \psi_2(p_1)$. Hence we conclude.
\end{itemize}

Assume $\psi = \icomp \psi_i$ and $\psi$ includes head steps. Then $p = 2^j 1 p_1$, where $\pair{r}{p_1} \in \esrs{\psi_j}$, $\psi_i$ respects $\set{r' \setsthat r' < r} \cup (r \comp \PPos{\psi(p)}$, and $\psi(p) = \psi_j(p_1)$.
Lemma~\ref{rsl:esrs-are-es} implies $\subtat{src(\psi)}{r} = \subtat{src(\psi_0)}{r} = l[s_{01}, \ldots, s_{0m}]$.
Clause \ref{it:proj-comp-left} yields $\psi \wideres \insertrs{src(\psi)}{\psi(p)}{r} = (\psi_0 \res \insertrs{src(\psi)}{\psi(p)}{r}) \comp \icomp \psi_{i+1} \wideres (\insertrs{src(\psi)}{\psi(p)}{r} \res \psi_0)$. In turn, Lemma~\ref{rsl:proj-over-irs-peqb-if-respects} implies 
$\insertrs{src(\psi)}{\psi(p)}{r} \comp (\psi_0 \res \insertrs{src(\psi)}{\psi(p)}{r}) \peqb \psi_0 \comp \insertrs{src(\psi_1)}{\psi(p)}{r}$, and Lemma~\ref{rsl:res-step-over-respects} entails $\insertrs{src(\psi)}{\psi(p)}{r} \res \psi_0 = \insertrs{src(\psi_1)}{\psi(p)}{r}$.
Therefore, $\insertrs{src(\psi)}{\psi(p)}{r} \comp (\psi \res \insertrs{src(\psi)}{\psi(p)}{r}) \peqb \psi_0 \comp \insertrs{src(\psi_1)}{\psi(p)}{r} \comp \icomp \psi_{i+1} \res \insertrs{src(\psi_1)}{\psi(p)}{r}$.
This argument can be iterated for all $n < j$; observe $\pair{r}{2^{j-n}1p_1} \in \esrs{\icomp \psi_{i+n}}$ and $\psi(p) = \icomp \psi_{i+n}(2^{j-n} 1 p_1)$.
We obtain
$\insertrs{src(\psi)}{\psi(p)}{r} \comp (\psi \res \insertrs{src(\psi)}{\psi(p)}{r}) \peqb \psi_0 \comp \ldots \comp \psi_{j-1} \comp \insertrs{src(\psi_j)}{\psi(p)}{r} \comp \icomp \psi_{i+j} \res \insertrs{src(\psi_j)}{\psi(p)}{r}$.
\Ih\ applies on $\pair{r}{p_1}$, allowing to assert $\insertrs{src(\psi_j)}{\psi(p)}{r} \comp \icomp \psi_{i+j} \res \insertrs{src(\psi_j)}{\psi(p)}{r} \peqb \icomp \psi_{i+j}$; recall $\psi(p) = \psi_j(p_1)$. This suffices to conclude.

\smallskip
Assume that $f^\Box$ is a fixed prefix for $\psi$, where $src(\psi) = f(s_1, \ldots, s_m)$. It is easy to obtain $r \neq \epsilon$, that is, $r = i r_1$. Say $\cfpceq{\psi}{f^\Box} = f(\psi_1, \ldots, \psi_m)$. Then $src(\psi) = src(\cfpceq{\psi}{f^\Box})$ implies $src(\psi_i) = s_i$. Moreover, Lemma~\ref{rsl:esrs-trace-forward} implies $\pair{r}{q} \in \esrs{\cfpceq{\psi}{f^\Box}}$ for some $q$ such that $\psi(p) = \cfpceq{\psi}{f^\Box}(q)$. In turn, this implies $q = i q_1$ and $\psi(p) = \psi_i(q_1)$. Therefore, 
$\insertrs{src(\psi)}{\psi(p)}{r}
  = \insertrs{src(f(\psi_1, \ldots, \psi_m)}{\psi_i(q_1)}	{i r_1} $ \newlineifdraft $
	= f(src(\psi_1), \ldots, \insertrs{src(\psi_i)}{\psi_i(q_1)}{r_1}, \ldots, src(\psi_m))$.
Clause \ref{it:proj-ff} applies to $\psi \res \insertrs{src(\psi)}{\psi(p)}{r}$, so that \\
$
\begin{array}{@{\hspace*{6mm}}cl}
\multicolumn{2}{l}{\insertrs{src(\psi)}{\psi(p)}{r} \comp (\psi \wideres \insertrs{src(\psi)}{\psi(p)}{r})} \arraysep
= & f(src(\psi_1), \ldots, \insertrs{src(\psi_i)}{\psi_i(q_1)}{r_1}, \ldots, src(\psi_m)) \arraysep
      & \comp \ f(\psi_1 , \ldots, \psi_i \wideres \insertrs{src(\psi_i)}{\psi_i(q_1)}{r_1}, \ldots, \psi_m ) \arraysep
\peqb & f(\psi_1, \ldots, \insertrs{src(\psi_i)}{\psi_i(q_1)}{r_1} \comp \psi_i \wideres \insertrs{src(\psi_i)}{\psi_i(q_1)}{r_1}, \ldots, \psi_m) \arraysep
\peqb & f(\psi_1, \ldots, \psi_i, \ldots, \psi_m) \ = \ \cfpceq{\psi}{f^\Box} \ \peqb \ \psi
\end{array}
$ \\
by definition of projection where clause \ref{it:proj-id} yields $\psi_j \res s_j =  \psi_j$ if $j \neq i$; structural equivalence; and \ih\ on $\pair{r_1}{q_1}$ along with Lemma~\ref{rsl:cfpceq-well-defined}.
\end{proof}

%\begin{itemize}
	%\item Introduction motivating and describing the main result.
	%\item Definition of $\esrsfn$ and $\insertrsfn$.
	%\item Auxiliary results.
	%\item Main proof.
%\end{itemize}

%%%%%        Limits
%%%%%%%%%%%%%%%%%%%%%%%%%%%%%%%%%%%%%%%%%%
\vspace*{-3mm}
\section{Limitations of this approach}
\label{sec:limitations}

\vspace*{-1mm}
As shown by the discussion at the beginning of Section~\ref{sec:property}, the definitions given in Section~\ref{sec:projection} allow to obtain proper projections for cases beyond the scope of Lemma~\ref{rsl:proj-over-irs-peqb-if-respects} and Prop.~\ref{rsl:proj-over-irs-peqb}.
However, this is not always the case, even for projections involving an infinite and a finite reduction.

As an example, consider the rules $\rho: gx \to fgx$, $\pi : a \to b$, and let $\psi = \icomp f^i \rho a$, $\phi = g \pi$.
We claim that according to the intuitive notion of projection, the result of $\psi \res \phi$ should be $\icomp f^{i} \rho b$, that is the same reduction denoted by $\psi$, applied to the target of $\phi$, namely $g(b)$. \Wrt\ $\phi \res \psi$, we note that the $\pi$ step denoted by $\phi$ vanishes in $tgt(\psi) = f\om$, while it can be performed on each partial target $f^n g a$. This phenomenon is referred to as \emph{infinitary erasure} in \cite{nosotros-rta14}. Accordingly, we could expect the result of $\phi \res \psi$ to be $f\om$.

We have 
$\psi \res \phi 
  = (\rho a \res g \pi) \comp \icomp f^{i+1} \rho a \wideres (g \pi \res \rho a)
  = \rho b \comp \icomp f^{i+1} \rho a \wideres fg \pi
  = \rho b \comp f(\icomp f^{i} \rho a \wideres g \pi)
  = \rho b \comp f(\psi \res \phi)
$, where 
the first and third equalities are justified by clauses \ref{it:proj-comp-left} and \ref{it:proj-ff} resp.,
%clauses \ref{it:proj-comp-left} and \ref{it:proj-ff} are used in the first and third equalities resp., 
and the last one just considers the definitions of $\psi$ and $\phi$.
Successive iterations yield $\rho b \comp f(\rho b \comp f(\psi \res \phi))$, $\rho b \comp f(\rho b \comp f(\rho b \comp f(\psi \res \phi)))$, etc., \ie, we obtain always expressions including an occurrence of the projection operator. On the other hand, 
$\phi \res \psi 
  = (g \pi \res \rho a) \wideres \icomp f^{i+1} \rho a
  = fg \pi \wideres \icomp f^{i+1} \rho a
  = f(g \pi \wideres \icomp f^{i} \rho a)
  = f(\phi \res \psi)
  = f^2(\phi \res \psi)
  \ldots$, 
where clauses \ref{it:proj-comp-right} and \ref{it:proj-ff} are used in the first and third equalities resp.. As in the previous case, the successive expressions obtained always include an occurrence of the projection operator.
This differs from the behaviour of the examples in Section~\ref{sec:projection}, where a final (\ie\ without occurrences of the projection operator) expression is obtained.

Observe that in both cases, the partial results approximate the expected final results. A similar phenomenon occurs when applying our definition to obtain the projection of an infinite composition over another one.
These observations suggest the need of incorporating the notion of limit in the proposed definition of projection,  in order to cover the cases not currently considered.

% \begin{itemize}
% 	\item Example of infinite-over-infinite.
% 	\item Example of infinitary erasure.
% \end{itemize}

%%%%%        Conclusions
%%%%%%%%%%%%%%%%%%%%%%%%%%%%%%%%%%%%%%%%%%
\section{Conclusions and future research directions}
\label{sec:conclusions}

\vspace*{-1mm}
In this article, we describe our work-in-progress about a possible characterisation, based on proof terms, of the projection of one reduction over another for infinitary, left-linear, first-order rewriting. We introduce this characterisation, show that it conveys the expected results in several cases, and prove a partial confluence property. We also discuss some limitations of the current form of the characterisation.

Two obvious further directions of work are: to extend the proposed definition, in order to comprise all projections of an infinitary reduction over another one, and to extend the soundness property expressed in Prop.~\ref{rsl:proj-over-irs-peqb} to all projections. 
Additionally, it would be interesting to further delimit the scope of the current version, that is, to understand in which cases the development of a projection can be performed without explicit use of the notion of limit.

\vspace*{-1mm}

\bibliographystyle{plain}
\bibliography{somebib}

\newpage
%%%%%        appendixes
%%%%%%%%%%%%%%%%%%%%%%%%%%%%%%%%%%%%%%%%%%
\appendix

\section{Proofs of auxiliary lemmas}
\label{sec:additional-proofs}

In this section, we include the proofs of the results stated in Section~\ref{sec:basic-properties}, and the complete proof of Lemma~\ref{rsl:proj-over-irs-peqb-if-respects}, whose statement is given in Section~\ref{sec:property-main-results} along with a proof sketch.

\medskip\noindent
\textbf{Lemma~\ref{rsl:cfpceq-well-defined}.} We recall the statement: 
\begin{quote}
Let $\psi$ be a proof term, and $C$ a context such that $C$ is a fixed prefix for $\psi$.
Then $\cfpceq{\psi}{C} = C[\psi_1, \ldots, \psi_m]$, and $\psi \peqb \cfpceq{\psi}{C}$. Moreover, these proof terms are \emph{structurally} equivalent, \ie, a \peqence\ derivation exists whose conclusion is $\cfpceq{\psi}{C} \peqb \psi$ and where neither \peqinout\ nor \peqoutin\ are used.
\end{quote}
\begin{proof}
We proceed by induction on $\pair{C}{\psi}$.
If $C = \emptyset$ then the result holds immediately.
Therefore, we assume $C = f(C_1, \ldots, C_m)$ in the sequel.

%\smallskip
Assume that $\psi = f(\psi_1, \ldots, \psi_m)$. In this case, $\cfpceq{\psi}{C} = f(\cfpceq{\psi_1}{C_1}, \ldots, \cfpceq{\psi_m}{C_m})$.
For each $i$, we can apply \ih\ on $\pair{C_i}{\psi_i}$.
Therefore, we obtain $\cfpceq{\psi}{C} = C[\psi_1, \ldots, \psi_m]$ immediately, and $\cfpceq{\psi}{C} \peqb \psi$ just by congruence.

%\smallskip
Assume that $\psi = \psi_1 \comp \psi_2$.
Then for $i = 1,2$, we can apply \ih\ on $\pair{f^\Box}{\psi_i}$; note that $f^\Box$ coincides, or is simpler than, $C$.
We obtain that $\cfpceq{\psi_i}{f^\Box} = f(\psi_{i1}, \ldots, \psi_{im}) \peqb \psi_i$.
In turn, $\psi_1 \peqb f(\psi_{11}, \ldots, \psi_{1m})$ and $\psi_2 \peqb f(\psi_{21}, \ldots, \psi_{2m})$ imply 
$\psi \peqb f(\psi_{11} \comp \psi_{21}, \ldots, \psi_{1m} \comp \psi_{2m})$, using \peqstruct\ and congruence.
Therefore, $C$ is a fixed prefix for $f(\psi_{11} \comp \psi_{21}, \ldots, \psi_{1m} \comp \psi_{2m})$ (recall that being a fixed prefix is stable by \peqence).
In turn, \ih\ applies to $\pair{C_j}{\psi_{1j} \comp \psi_{2j}}$ for each $j$, so that $\cfpceq{\psi}{C} = C[\psi_1, \ldots, \psi_m]$ follows immediately.
Moreover, \ih\ yields $\cfpceq{\psi_{1i} \comp \psi_{2i}}{C_i} \peqb \psi_{1i} \comp \psi_{2i}$, so that we obtain
$\psi 
  \peqb f(\psi_{11} \comp \psi_{21}, \ldots, \psi_{1m} \comp \psi_{2m})
	\peqb f(\cfpceq{\psi_{11} \comp \psi_{21}}{C_1}, \ldots, \cfpceq{\psi_{1m} \comp \psi_{2m}}{C_m})
	= \cfpceq{\psi}{C}$ by congruence.
	
%\smallskip
Assume that $\psi = \icomp \psi_i$. 
As in the previous case, \ih\ can be applied on $\pair{f^\Box}{\psi_i}$, now for each $i < \omega$. From $\psi_i \peqb f(\psi_{i1}, \ldots, \psi_{im})$ for each $i$, we obtain $\psi \peqb f(\icomp \psi_{i1}, \ldots, \icomp \psi_{im})$ by means of \eqlinfcomp, \peqinfstruct\ and transitivity, so that $C$ is a fixed prefix for the last proof term.
\Ih\ can be applied on $\pair{C_j}{\icomp \psi_{ji}}$ for each $j$; hence, the argument given for binary composition is valid in this case.
\end{proof}

\medskip\noindent
\textbf{Lemma~\ref{rsl:esrs-trace-forward}.} We recall the statement: 
\begin{quote}
Let $\psi$ be a proof term and $f$ a function symbol, such that $f^\Box$ is a fixed prefix for $\psi$, and $r, p$ such that $\pair{r}{p} \in \esrs{\psi}$.
Then there exists $q \in \Pos{\cfpceq{\psi}{f^\Box}}$ such that 
$(\cfpceq{\psi}{f^\Box}) (q) = \psi(p)$ and
$\pair{r}{q} \in \esrs{\cfpceq{\psi}{f^\Box}}$.
\end{quote}
\begin{proof}
We proceed by induction on $\psi$. Observe that $\psi = \mu(\psi_1, \ldots, \psi_m)$ would contradict 
$f^\Box$ to be a fixed prefix for $\psi$.
%$\psi$ to respect $\set{\epsilon}$.
If $\psi = f(\psi_1, \ldots, \psi_m)$, then $\cfpceq{\psi}{f^\Box} = \psi$, so that it suffices to take $q = p$.

Assume $\psi = \psi_1 \comp \psi_2$. In this case, 
$\cfpceq{\psi}{f^\Box} = f(\psi_{11} \comp \psi_{21}, \ldots, \psi_{1m} \comp \psi_{2m})$, where $\cfpceq{\psi_i}{f^\Box} = f(\psi_{i1}, \ldots, \psi_{im})$ for $i = 1,2$; $p = j p_1$ where either $j = 1$ or $j = 2$; and $\pair{r}{p_1} \in \esrs{\psi_j}$.
\Ih\ on $\psi_j$ entails the existence of some $q_1$ that verifies $f(\psi_{j1}, \ldots, \psi_{jm})(q_1) = \psi_j(p_1) = \psi(p)$, and $\pair{r}{q_1} \in \esrs{f(\psi_{j1}, \ldots, \psi_{jm})}$.
The latter assertion implies 
the existence of $k$ such that $1 \leq k \leq m$, 
$r = k r_2$, $q_1 = k q_2$, and $\pair{r_2}{q_2} \in \esrs{\psi_{jk}}$. In turn, $q_1 = k q_2$ implies that $\psi(p) = f(\psi_{j1}, \ldots, \psi_{jm})(k q_2) = \psi_{jk}(q_2)$. 
If $j = 1$, then it is immediate that $\pair{r_2}{j q_2} \in \esrs{\psi_{1k} \comp \psi_{2k}}$.
If $j = 2$, then $\pair{r}{2p_1} \in \esrs{\psi}$ implies that $\psi_1$ respects $\set{r' \setsthat r' < k r_2} \cup (k r_2 \cdot \PPos{\psi(p)})$, so that $f(\psi_{11}, \ldots, \psi_{1m})$ does.
Therefore, $\psi_{1k}$ respects $\set{r' \setsthat r' < r_2} \cup (r_2 \cdot \PPos{\psi_{2k}(q_2)})$.
Hence, we have again $\pair{r_2}{j q_2} \in \esrs{\psi_{1k} \comp \psi_{2k}}$.
We take $q = k j q_2$. A straightforward analysis suffices to conclude.

A similar analysis of that given for $j = 2$ applies if $\psi = \icomp \psi_i$, considering that $\cfpceq{\psi}{f^\Box} = f(\icomp \psi_{i1}, \ldots, \icomp \psi_{im})$, where $\cfpceq{\psi_i}{f^\Box} = f(\psi_{i1}, \ldots, \psi_{im})$ for all $i < \omega$, $p = 2^j 1 p_1$, where $\pair{r}{p_1} \in \esrs{\psi_j}$, and $\psi_i$ respects $\set{r' \setsthat r' < r} \cup (r \cdot \PPos{\psi_j(p_1)})$ for all $i < j$.
\end{proof}

\medskip\noindent
\textbf{Lemma~\ref{rsl:res-step-over-respects}.} We recall the statement: 
\begin{quote}
Let $\mu: l \to h$ be a rule, $r$ a position, and $\psi$ a proof term, such that $\psi$ respects $\set{r' \setsthat r' < r} \cup (r \cdot \PPos{\mu})$, $\subtat{src(\psi)}{r} = l[s_1, \ldots, s_m]$, and consequently $\subtat{tgt(\psi)}{r} = l[t_1, \ldots, t_m]$.
Then 
$\insertrs{src(\psi)}{\mu}{r} \wideres \psi = \insertrs{tgt(\psi)}{\mu}{r}$.
\end{quote}
\begin{proof}
We proceed by induction on $r$.

Assume $r = \epsilon$, implying that $\psi$ respects $\PPos{\mu}$. 
Lemma~\ref{rsl:cfpceq-well-defined} implies that $\cfpceq{\psi}{l^\Box} = l[\psi_1, \ldots, \psi_m] \peqb \psi$, so that $src(\psi) = l[src(\psi_1), \ldots, src(\psi_m)]$ and analogously for target.
We have \\
$
\begin{array}{@{}ll}
\insertrs{src(\psi)}{\mu}{r} \wideres \psi 
  \ = \  \mu(src(\psi_1), \ldots, src(\psi_m)) \wideres \psi
\arraysep \quad
  \ = \ \mu(src(\psi_1) \res \psi_1, \ldots, src(\psi_m) \res \psi_m) \arraysep \quad
  \ = \  \mu(tgt(\psi_1), \ldots, tgt(\psi_m)) 
  \ = \  \insertrs{tgt(\psi)}{\mu}{r}
\end{array}
$ \\
by clauses \ref{it:proj-mul} and \ref{it:proj-id}.

Assume $r = i r_1$, say $src(\psi)(\epsilon) = f$. 
Observe that $f^\Box$ is a fixed prefix for $\psi$, so that $\cfpceq{\psi}{f^\Box} = f(\psi_1, \ldots, \psi_m)$. 
Lemma~\ref{rsl:cfpceq-well-defined} implies $\psi \peqb f(\psi_1, \ldots, \psi_m)$, so that their source and target terms coincide. 
Observe that $f(\psi_1, \ldots, \psi_m)$ respects $\set{r' > r} \cup (r \cdot \PPos{\mu})$ since $\psi$ does, and therefore, that $\psi_i$ respects $\set{r' > r_1} \cup (r_1 \cdot \PPos{\mu})$. Moreover, 
$\subtat{src(\psi)}{r} = \subtat{src(f(\psi_1, \ldots, \psi_m))}{r} = \subtat{src(\psi_i)}{r_1}$, and analogously for the targets.
Consequently, we can apply \ih\ on $r_1$, obtaining that $\insertrs{src(\psi_i)}{\mu}{r_1} \wideres \psi_i = \insertrs{tgt(\psi_i)}{\mu}{r_1}$.
We have \\
$
\begin{array}{@{}ll}
\insertrs{src(\psi)}{\mu}{r} \wideres \psi \arraysep \quad
  = f(src(\psi_1), \ldots, \insertrs{src(\psi_i)}{\mu}{r_1}, \ldots, src(\psi_m)) \wideres f(\psi_1, \ldots, \psi_m) \arraysep \quad
	= f(src(\psi_1) \res \psi_1, \ldots,  \insertrs{src(\psi_i)}{\mu}{r_1} \res \psi_i, \ldots, src(\psi_1) \res \psi_1) \arraysep \quad
	= f(tgt(\psi_1), \ldots, \insertrs{tgt(\psi_i)}{\mu}{r_1}, \ldots, tgt(\psi_m)) \arraysep \quad
	= \insertrs{tgt(f(\psi_1, \ldots, \psi_m))}{\mu}{r} 
	\ = \  \insertrs{tgt(\psi)}{\mu}{r}
\end{array}
$ \\
by clauses \ref{it:proj-ff} and \ref{it:proj-id}. Thus we conclude.
\end{proof}

\medskip\noindent
\textbf{Lemma~\ref{rsl:res-esrs-over-pterm}.} We recall the statement: 
\begin{quote}
Whenever $\pair{r}{p} \in \esrs{\psi}$, we have
$\insertrs{src(\psi)}{\psi(p)}{r} \wideres \psi = tgt(\psi)$.
\end{quote}
\begin{proof}
We proceed by induction on $\pair{r}{p}$.

Assume $r = \epsilon$ and $\psi = \mu(\psi_1, \ldots, \psi_m)$, let us say $\mu: l \to h$.
In this case $p = \epsilon$, so that $\insertrs{src(\psi)}{\psi(p)}{r}) = \mu(src(\psi_1), \ldots, src(\psi_m))$. We have 
$\insertrs{src(\psi)}{\psi(p)}{r} \wideres \psi 
	= h[src(\psi_1) \res \psi_1, \ldots, src(\psi_m) \res \psi_m]
	= tgt(\psi)
$.

Assume that $r \neq \epsilon$ and $\psi = \mu(\psi_1, \ldots, \psi_m)$, say $\mu: l \to h$. In this case $r = r_1 r_2$ and $p = i p_1$, where $l(r_1) = x_i$ and $\pair{r_2}{p_1} \in \esrs{\psi_i}$; recall $\psi(p) = \psi_i(p_1)$.
Observe that $src(\psi) = l[src(\psi_1), \ldots, src(\psi_m)]$. Then
$\insertrs{src(\psi)}{\psi(p)}{r} \res \psi
  = l[src(\psi_1), \ldots, \insertrs{src(\psi_i)}{\psi_i(p_1)}{r_2}, \ldots, src(\psi_m)] \wideres \mu(\psi_1, \ldots, \psi_m)
	= $ \\ $h[src(\psi_1) \res \psi_1, \ldots, \insertrs{src(\psi_i)}{\psi_i(p_1)}{r_2}\res \psi_i, \ldots, src(\psi_m) \res \psi_m]$, note that clause \ref{it:proj-lmu} applies. 
\Ih\ on $\pair{r_2}{p_1}$ entails 
$\insertrs{src(\psi_i)}{\psi_i(p_1)}{r_2} \res \psi_i = tgt(\psi_i)$. On the other hand, if $j \neq i$ then $src(\psi_j) \res \psi_j = tgt(\psi_j)$. Hence we conclude.

Assume that $\psi = \psi_1 \comp \psi_2$, an either binary or infinite composition, $\psi$ includes head steps, and an arbitrary $r$. 
In this case, $p = j p_1$ and $\pair{r}{p_1} \in \esrs{\psi_i}$. 
Recall that $\psi(p) = \psi_j(p_1)$, $src(\psi) = src(\psi_1)$, $tgt(\psi_1) = src(\psi_2)$ and $tgt(\psi) = tgt(\psi_2)$. Clause \ref{it:proj-comp-right} yields 
$\insertrs{src(\psi)}{\psi(p)}{r} \wideres \psi 
  = (\insertrs{src(\psi_1)}{\psi(p)}{r} \res \psi_1) \wideres \psi_2$.
If $p = 1$, then \ih\ on $\pair{r}{p_1}$ entails 
$(\insertrs{src(\psi_1)}{\psi(p)}{r} \res \psi_1) \wideres \psi_2
  = src(\psi_2) \res \psi_2$, so that clause \ref{it:proj-id} allows to conclude.
If $p = 2$, then $\psi_1$ respects $\set{r' \setsthat r' < r} \comp \PPos{\psi(p)}$. Moreover, Lemma~\ref{rsl:esrs-are-es} implies $\subtat{src(\psi)}{r} = l[s_1, \ldots, s_m]$, where $\psi(r) : l \to r$. 
Then Lemma~\ref{rsl:res-step-over-respects} applies, yielding 
$(\insertrs{src(\psi_1)}{\psi(p)}{r} \res \psi_1) \wideres \psi_2
  = \insertrs{src(\psi_2)}{\psi(p)}{r} \res \psi_2$. Hence, \ih\ on $\pair{r}{p_1}$ suffices to conclude.

Assume that $f^\Box$ is a fixed prefix for $\psi$. In this case $src(\psi) = f(s_1, \ldots, s_m)$ and $r = i r_1$, so that
$\insertrs{src(\psi)}{\psi(p)}{r}) = f(s_1, \ldots, \insertrs{s_i}{\psi(p)}{r_1}, \ldots, s_m)$.
Let us say $\cfpceq{\psi}{f^\Box} = f(\psi_1, \ldots, \psi_i, \ldots, \psi_m)$; observe $src(\psi_i) = s_i$ for all $i$.
Lemma~\ref{rsl:esrs-trace-forward} implies the existence of some $q$ such that $\psi(p) = f(\psi_1, \ldots, \psi_i, \ldots, \psi_m) (q)$ and $\pair{r}{q} \in \esrs{f(\psi_1, \ldots, \psi_i, \ldots, \psi_m)}$.
In turn, the latter assertion entails that $q = i q_1$ and $\pair{r_1}{q_1} \in \esrs{\psi_i}$; observe that $\psi(p) = \psi_i(q_1)$. \Ih\ on $\pair{r_1}{q_1}$ allows to assert that $\insertrs{s_i}{\psi_i(q_1)}{r_1} \res \psi_i = tgt(\psi_i)$. Then clause~\ref{it:proj-ff} yields \\[2pt]
$\insertrs{src(\psi)}{\psi(p)}{r} \wideres \psi
  = f(s_1 \res \psi_1, \ldots, \insertrs{s_i}{\psi_i(q_1)}{r_1} \res \psi_i, \ldots, s_m \res \psi_m) $ \\ $
  = f(tgt(\psi_1), \ldots, tgt(\psi_i), \ldots, tgt(\psi_m)) 
  = tgt(\cfpceq{\psi}{f^\Box}) = tgt(\psi)
$.
\end{proof}

\medskip\noindent
\textbf{Lemma~\ref{rsl:proj-over-irs-peqb-if-respects}.} We recall the statement: 
\begin{quote}
Let $\psi$ be a proof term, $\mu: l \to h$ a rule symbol, and $r$ a position, such that $\psi$ respects $\set{r' \setsthat r' < r} \cup (r \cdot \PPos{\mu})$ and $\subtat{src(\psi)}{r} = l[s_1, \ldots, s_m]$.
Then
$\insertrs{src(\psi)}{\mu}{r} \comp \psi \wideres \insertrs{src(\psi)}{\mu}{r}
  \peqb \psi \comp \insertrs{tgt(\psi)}{\mu}{r} = \psi \comp \insertrs{src(\psi)}{\mu}{r} \wideres \psi$; \confer\ Lemma~\ref{rsl:res-step-over-respects}.
\end{quote}
\begin{proof}
We proceed by induction on $r$.

Assume that $r = \epsilon$, so that $src(\psi) = l[s_1, \ldots, s_m]$ and $\psi$ respects $\PPos{\mu}$. 
We have $\cfpceq{\psi}{l^\Box} = l[\psi_1, \ldots, \psi_m]$, so that recalling $\psi \peqb \cfpceq{\psi}{l^\Box}$ we obtain $src(\psi_i) = s_i$ for all $i$. 
Observe that $\insertrs{src(\psi)}{\mu}{r} = \mu(s_1, \ldots, s_m)$.
Clause \ref{it:proj-lmu} yields
$\psi \wideres \insertrs{src(\psi)}{\mu}{r} = h[\psi_1 \res s_1, \ldots, \psi_m \res s_m] = h[\psi_1, \ldots, \psi_m]$. Therefore, 
$\insertrs{src(\psi)}{\mu}{r} \comp \psi \wideres \insertrs{src(\psi)}{\mu}{r}
	= \mu(s_1, \ldots, s_m) \comp h[\psi_1, \ldots, \psi_m] 
	\peqb \mu(\psi_1, \ldots, \psi_m)
	\peqb l[\psi_1, \ldots, \psi_m] \comp \mu(tgt(\psi_1), \ldots, tgt(\psi_m))$,
applying \peqoutin\ and \peqinout\ resp. in the $\peqb$-steps.
We conclude by recalling that $l[\psi_1, \ldots, \psi_m] = \cfpceq{\psi}{l^\Box} \peqb \psi$, which in turn implies $tgt(\psi) = l[tgt(\psi_1), \ldots, tgt(\psi_m)]$.

Assume that $r = i r_1$. Say $src(\psi) = f(s_1, \ldots, s_m)$, observe that $f^\Box$ is a fixed prefix for $\psi$. We have $\cfpceq{\psi}{f^\Box} = f(\psi_1, \ldots, \psi_m)$, so that recalling $\psi \peqb \cfpceq{\psi}{f^\Box}$ we obtain $src(\psi_i) = s_i$ for all $i$, and also that $\cfpceq{\psi}{f^\Box}$ respects $\set{r' \sthat r' < r} \cup (r \cdot \PPos{\mu})$.
Observe $\insertrs{src(\psi)}{\mu}{r} = f(s_1, \ldots, \insertrs{s_i}{\mu}{r_1}, \ldots, s_m)$. Clause \ref{it:proj-ff} yields
$\psi \wideres \insertrs{src(\psi)}{\mu}{r} 
  = f(\psi_1 \res s_1, \ldots, \psi_i \res \insertrs{s_i}{\mu}{r_1}, \ldots, \psi_m \res s_m)
	= f(\psi_1, \ldots, \psi_i \res \insertrs{s_i}{\mu}{r_1}, \ldots, \psi_m)$.
On the other hand, $\psi_i$ respects $\set{r' \sthat r' < r_1} \cup (r_1 \cdot \PPos{\mu})$ and $\subtat{s_i}{r_1} = \subtat{src(\psi)}{r}$. Then \ih\ applies to $r_1$, yielding 
$\insertrs{s_i}{\mu}{r_1} \comp \psi_i \wideres \insertrs{s_i}{\mu}{r_1} \peqb \psi_i \comp \insertrs{tgt(\psi_i)}{\mu}{r_1}$.
Consequently, \\
$
\insertrs{src(\psi)}{\mu}{r} \comp \psi \wideres \insertrs{src(\psi)}{\mu}{r} 
\\ \hspace*{5mm}
\peqb
f(s_1, \ldots, \insertrs{s_i}{\mu}{r_1}, \ldots, s_m) \comp f(\psi_1, \ldots, \psi_i \res \insertrs{s_i}{\mu}{r_1}, \ldots, \psi_m)
\\ \hspace*{5mm}
\peqb
f(s_1 \comp \psi_1, \ldots, \insertrs{s_i}{\mu}{r_1} \comp \psi_i \res \insertrs{s_i}{\mu}{r_1}, \ldots, s_m \comp \psi_m)
\\ \hspace*{5mm}
\peqb
f(\psi_1 \comp tgt(\psi_1), \ldots, \psi_i \comp \insertrs{tgt(\psi_i)}{\mu}{r_1}, \ldots, \psi_m \comp tgt(\psi_m))
\\ \hspace*{5mm}
\peqb
f(\psi_1 , \ldots, \psi_i , \ldots, \psi_m ) \comp f(tgt(\psi_1), \ldots, \insertrs{tgt(\psi_i)}{\mu}{r_1}, \ldots, tgt(\psi_m))
$ \\
where structural equivalence, including \peqstruct\ and the easy fact $src(\chi) \comp \chi \peqb \chi \peqb \chi \comp tgt(\chi)$, is applied repeatedly.
Recalling that $\psi \peqb \cfpceq{\psi}{f^\Box}$ suffices to conclude, similarly as in the previous case.
\end{proof}

\section{About the scope of clauses \ref{it:proj-comp-left} and \ref{it:proj-comp-right}}

The clauses \ref{it:proj-comp-left} and \ref{it:proj-comp-right} in the definition of infinitary projection given in Section~\ref{sec:projection}, handle the compositions (either binary or infinite) that include head steps. Note that different cases involving compositions that do not include head steps match the clauses \ref{it:proj-mul}, \ref{it:proj-lmu} and \ref{it:proj-ff}.
When both $\phi$ and $\chi$ are compositions, and at least one of them includes a head step, in princple, either clause \ref{it:proj-comp-left} or clause \ref{it:proj-comp-right} could apply to $\phi \res \chi$. 
The added conditions on $\chi$ in the former clause describes the decision we have taken about this issue. We show through two examples, that the particular form of these conditions leads to a terminating (modulo computation of source/target of proof terms) computation of projections of infinite over finite, or finite over infinite, reductions in some cases.
The examples use only the rule $\mu: f(x) \to g(x)$.
Consider: \\[2pt]
$
\begin{array}{rcll}
\multicolumn{4}{l}{(\mu f\om \comp gf \mu f\om) \wideres (f \mu f\om \comp \icomp fgf g^i \mu f\om)} \arraysep
  & = & (\mu f\om \res (f \mu f\om \comp \icomp fgf g^i \mu f\om)) \comp (g f \mu f\om \wideres ((f \mu f\om \comp \icomp fgf g^i \mu f\om) \res \mu f\om))
	& \textnormal{cl. \ref{it:proj-comp-left}} \arraysep
  & = & \mu gf g\om \comp (g f \mu f\om \wideres ((f \mu f\om \comp \icomp fgf g^i \mu f\om) \res \mu f\om))
	& \textnormal{cl. \ref{it:proj-mul}, \ref{it:proj-id}} \arraysep
  & = & \mu gf g\om \comp (g f \mu f\om \wideres 
		((f \mu f\om \res \mu f\om) \comp \icomp fgf g^i \mu f\om \wideres (\mu f\om \res f \mu f\om)))
	& \textnormal{cl. \ref{it:proj-comp-left}} \arraysep
  & = & \mu gf g\om \comp (g f \mu f\om \wideres 
		(g \mu f\om \comp \icomp fgf g^i \mu f\om \res \mu g f\om)) \arraysep
  & = & \mu gf g\om \comp (g f \mu f\om \wideres 
		(g \mu f\om \comp g(\icomp gf g^i \mu f\om))) 
	& \textnormal{cl. \ref{it:proj-lmu}, \ref{it:proj-id}} \arraysep
  & = & \mu gf g\om \comp g (f \mu f\om \wideres 
		(\mu f\om \comp \icomp gf g^i \mu f\om) ) 
	& \textnormal{cl. \ref{it:proj-ff}} \arraysep
  & = & \mu gf g\om \comp g 
			( (f \mu f\om \res \mu f\om) \wideres \icomp gf g^i \mu f\om ) 
	& \textnormal{cl. \ref{it:proj-comp-right}} \arraysep
  & = & \mu gf g\om \comp g 
			( g \mu f\om \wideres \icomp gf g^i \mu f\om ) \arraysep
  & = & \mu gf g\om \comp g^2	(\mu f\om \wideres \icomp f g^i \mu f\om ) 
	& \textnormal{cl. \ref{it:proj-ff}} \arraysep
  & = & \mu gf g\om \comp g^2	\mu g\om
	& \textnormal{cl. \ref{it:proj-mul}, \ref{it:proj-id}} 
\end{array}
$ \\[2pt]
Note the use of clause \ref{it:proj-comp-left} in the first step, due to the condition $\chi = \icomp \chi_i$. Recall that $f \mu f\om \comp \icomp fgf g^i \mu f\om$ 
%$f \mu f\om \comp \icomp fgf g^i \mu f\om$ 
is in fact an infinite composition, whose first component is $f \mu f\om$, the second one is $fgf \mu f\om$, and so on.
Let us check how the projection computation would proceed if we used, in this situation, clause \ref{it:proj-comp-right} instead \\[2pt]
$
\begin{array}{rcll}
\multicolumn{4}{l}{(\mu f\om \comp gf \mu f\om) \wideres (f \mu f\om \comp \icomp fgf g^i \mu f\om)} \arraysep
  & = & ( (\mu f\om \comp gf \mu f\om) \res f \mu f\om ) \wideres \icomp fgf g^i \mu f\om
	& \textnormal{cl. \ref{it:proj-comp-right}} \arraysep
  & = & ( (\mu f\om \res f \mu f\om) \comp (gf \mu f\om \wideres (f \mu f\om \res \mu f\om) ) ) 
			\wideres \icomp fgf g^i \mu f\om
	& \textnormal{cl. \ref{it:proj-comp-left}} \arraysep
  & = & ( \mu g f\om  \comp (gf \mu f\om \wideres g \mu f\om ) ) 
			\wideres \icomp fgf g^i \mu f\om \arraysep
  & = & ( \mu g f\om  \comp g^2 \mu f\om ) \wideres \icomp fgf g^i \mu f\om
	& \textnormal{cl. \ref{it:proj-ff}, \ref{it:proj-lmu}, \ref{it:proj-id}} \arraysep
  & = & ( (\mu g f\om  \comp g^2 \mu f\om) \res fgf \mu f\om ) 
			\wideres \icomp fgf g^{i+1} \mu f\om
	& \textnormal{cl. \ref{it:proj-comp-right}} \arraysep
  & = & ( \mu g f g f\om \comp g^2 \mu f\om \res g^2 f \mu f\om )
			\wideres \icomp fgf g^{i+1} \mu f\om 
	& \textnormal{cl. \ref{it:proj-comp-left}} \arraysep
  & = & ( \mu g f g f\om \comp g^2 \mu g f\om )
			\wideres \icomp fgf g^{i+1} \mu f\om \arraysep
  & = & ( (\mu g f g f\om \comp g^2 \mu g f\om) \res fgfg \mu f\om )
			\wideres \icomp fgf g^{i+2} \mu f\om 
	& \textnormal{cl. \ref{it:proj-comp-right}} \arraysep
	& & \ldots
\end{array}
$ \\[2pt]
This computation would continue indefinitely, since it loops over projections of the form $(\phi_1 \comp \phi_2) \wideres \psi$, where $\phi_1$ and $\phi_2$ are one-steps, and $\psi$ is an infinite composition.

\medskip\noindent
Let us consider now an infinite-over-finite case: \\[2pt]
$
\begin{array}{rcll}
\multicolumn{4}{l}{(f \mu f\om \comp \icomp fgf g^i \mu f\om) \wideres (\mu f\om \comp gf \mu f\om)} \arraysep
  & = & ((f \mu f\om \comp \icomp fgf g^i \mu f\om) \res \mu f\om) \wideres gf \mu f\om
			& \textnormal{cl. \ref{it:proj-comp-right}} \arraysep
  & = & ( (f \mu f\om \res \mu f\om) \comp \icomp fgf g^i \mu f\om \wideres (\mu f\om \res f \mu f\om) )
	\wideres gf \mu f\om
			& \textnormal{cl. \ref{it:proj-comp-left}} \arraysep
  & = & ( g \mu f\om \comp \icomp fgf g^i \mu f\om \res \mu g f\om )
	\wideres gf \mu f\om \arraysep
  & = & ( g \mu f\om \comp g (\icomp gf g^i \mu f\om) )
	\wideres gf \mu f\om \arraysep
  & = & g ( (\mu f\om \comp \icomp gf g^i \mu f\om) \wideres f \mu f\om ) 
			& \textnormal{cl. \ref{it:proj-ff}} \arraysep
  & = & g ( 
	(\mu f\om \res f \mu f\om ) \comp \icomp gf g^i \mu f\om \wideres (f \mu f\om \res \mu f\om)
					) 
			& \textnormal{cl. \ref{it:proj-comp-left}} \arraysep
  & = & g ( \mu g f\om \comp \icomp gf g^i \mu f\om \wideres g \mu f\om ) \arraysep
  & = & g ( \mu g f\om \comp g^2 (\icomp g^i \mu f\om) ) 
	& \textnormal{cl. \ref{it:proj-ff}, \ref{it:proj-lmu}, \ref{it:proj-id}} \arraysep
  & \peqb & g \mu g f\om \comp g^3 (\icomp g^i \mu f\om) 
\end{array}
$ \\[2pt]
We observe that clause \ref{it:proj-comp-right} applies in the first step, since $\mu f\om \comp gf \mu f\om$ is a binary composition. The use of clause \ref{it:proj-comp-left} in such a case would lead to \\[2pt]
$
\begin{array}{rcll}
\multicolumn{4}{l}{(f \mu f\om \comp \icomp fgf g^i \mu f\om) \wideres (\mu f\om \comp gf \mu f\om)} \arraysep
  & = & 
    (f \mu f\om \wideres (\mu f\om \comp gf \mu f\om))
    \comp
    \icomp fgf g^i \mu f\om \wideres ((\mu f\om \comp gf \mu f\om) \wideres f \mu f\om)
      & \ \ \ \textnormal{cl. \ref{it:proj-comp-left}} \arraysep
  & = & 
    g \mu g f\om
    \comp
    \icomp fgf g^i \mu f\om \wideres (\mu g f\om \comp g^2 \mu f\om) \arraysep
  & & \ldots
\end{array}
$ \\[2pt]
We find again a loop, now on projections of the form $\psi \wideres (\phi_1 \comp \phi_2)$.

\bigskip
These observations lead to the precise form of the definition of projections we propose in this article.

\end{document}